\documentclass[12pt]{article}
\usepackage{amsmath}
\usepackage{amsthm}
\usepackage{amsfonts}
\usepackage{amssymb}
\usepackage{graphicx,psfrag,epsf}
\usepackage{subcaption} % subfigures and subcaptions
\usepackage{enumerate}
\usepackage{natbib}
\usepackage{url} % not crucial - just used below for the URL 
\usepackage{tikz}
%% packages for algorithm environment
\usepackage{algpseudocode}
\usepackage{algorithmicx} % algorithm and algorithmic environment
\usepackage{algorithm}    % \listofalgorithms command 
\usepackage[colorlinks=true]{hyperref}
\usepackage{booktabs}
\usepackage{multirow}
\usepackage{rotating}
% Added packages
%\usepackage{bbold} 			% bold font, for example, indicator, ideal
\usepackage{bm}             % bold symbols, command \bm
\usepackage{dsfont} 		% for indicator
\usepackage{color, graphics}
\usepackage{enumitem}       % customizable enumerate labels
\usepackage{comment}

%\pdfminorversion=4
% NOTE: To produce blinded version, replace "0" with "1" below.
\newcommand{\blind}{0}

\newcommand{\paul}[2]{\textcolor{red}{#1} \iffalse #2 \fi}
% DON'T change margins - should be 1 inch all around.
%\usepackage[margin=1in]{geometry}
% original margins here:
%\addtolength{\oddsidemargin}{-.5in}%
%\addtolength{\evensidemargin}{-.5in}%
%\addtolength{\textwidth}{1in}%
%\addtolength{\topmargin}{-.1in}%
% margins from the template sent by e-mail:
\addtolength{\oddsidemargin}{-.5in}%
\addtolength{\evensidemargin}{-1in}%
\addtolength{\textwidth}{1in}%
\addtolength{\textheight}{1.7in}%
\addtolength{\topmargin}{-1in}%

%%% This file contains definitions of various useful macros and environments %%%
%%% Please add more macros here instead of cluttering other files with them. %%%

%%% Minor tweaks of style
% Draw black "slugs" whenever a line overflows, so that we can spot it easily.
% \overfullrule=1mm

%%% Macros for definitions, theorems, claims, examples, ... (requires amsthm package)

\theoremstyle{plain}

\newtheorem{athm}{Theorem}
\newtheorem{alemma}[athm]{Lemma}

\definecolor{red2}{rgb}{0.9333333, 0, 0}
\definecolor{mediumblue}{rgb}{0, 0, 0.8039216}
\definecolor{magenta4}{rgb}{0.545098, 0, 0.545098}
%\hypersetup{colorlinks, linkcolor=red2, citecolor=mediumblue, urlcolor=magenta4, breaklinks=true}

%%% Číslování vzorců, obrázků a tabulek - balíček amsmath zapotřebí!
%\numberwithin{equation}{section}
%\numberwithin{table}{section}
%\numberwithin{figure}{section}

%%% Bibliografie - CZ version
%\bibliographystyle{czplainnat} 

%%% The field of all real and natural numbers
\newcommand{\R}{\mathbb{R}}

%%% Useful operators for statistics and probability

\DeclareMathOperator{\E}{\mathsf{E}}

\DeclareMathOperator{\diag}{\mathsf{diag}}

%%% matrix transposition
\newcommand{\T}[1]{#1^\top}        

%%% Various math goodies

 %% tied with package dsfont 
%% can be replaced by \mathbb{1} from bbold package

%\newcommand{\Rko}{\textcolor{blue}{\textbf{R}}} %% symbol for statistical software R

%%% Probability distributions
\newcommand{\norm}[2]{\mathsf{N}\left(#1, \,#2 \right)} %normal 1=E,2=var
\newcommand{\knorm}[3]{\mathsf{N}_{#1}\left(#2, \,#3 \right)} %MV normal 1=dim k, 2=E, 3=var
 %matrix normal 1=nrow, 2=ncol, 3=E, 4=var of vec(X)

 %normal 1=E, 2=var, 3=lower, 4=upper

\newcommand{\gammadist}[2]{\Gamma\left(#1, #2 \right)} %Gamma distribution 1=shape alpha, 2=rate beta
 %Inverse Gamma distribution 1=shape alpha, 2=scale theta
 %Exponential distribution 1=rate lambda
 %F-distribution (Fisher-Snedecor) 1=degrees of freedom 1 a, 2=degrees of freedom 2
 %Generalized inverse Gaussian distribution 1=shape p, 2=a (with x), 3=b (with 1/x)
 %Generalized inverse Gaussian distribution 1=shape p, 2=scale b (both with x and 1/x)
 %symbol for modified Bessel function
 %Wishart 1=dim, 2=scale matrix, 3=df
 %Inverse Wishart 1=dim, 2=scale matrix, 3=df

 %log-normal 1=ElogX, 2=varlogX
 %alternative = bernoulli 1=prob
 %binomial 1=n(observations), 2=prob(succes)
 %multinomial 1=K (ngroups), 2=n(observations), 3=probs
 %geometric 1=prob succes
\newcommand{\Pois}[1]{\mathsf{Pois}\left(#1 \right)} %Poisson 1=lambda
 %Linear model 1=E, 2=var
 %Normal linear model 1=E, 2=var

 %Chi square 1=df
 %Student 1=df
 %Student 1=df, 2=scale
 %Fisher-Snedecor 1=df1, 2=df2

 %Dirac distribution - concentrated in #1
 %Uniform() 1=min, 2=max
 %Uniform(] 1=min, 2=max
 %Uniform[) 1=min, 2=max
 %Uniform[] 1=min, 2=max
 %Uniform{} 1=values
 %Beta 1-alpha, 2 - beta
 %Beta function 1-alpha, 2 - beta
 %Dirichlet distribution 1=dim, 2=alphas
\newcommand{\ELBO}[1]{\mathsf{ELBO}\left(#1\right)} %ELBO function of parameter 1
 %p and q part of the ELBO function of parameter 1
 %model part of the ELBO function of parameter 1

\newcommand{\Eq}{\E_{q_{\bm\phi}}}

\newcommand{\shp}{\mathrm{shp}}
\newcommand{\rte}{\mathrm{rte}}
\newcommand{\loc}{\mathrm{loc}}
\newcommand{\scl}{\mathrm{scl}}
\newcommand{\sclsq}{\mathrm{var}}
\newcommand{\covm}{\mathrm{cov}}
\DeclareMathAlphabet{\mathbbb}{U}{bbold}{m}{n}
\newcommand{\ideal}{\mathbbb{i}}

%%% Various table goodies

%%% Use symbols for footnotes instead of numbers

%1   asterisk    *   2   dagger  †   3   double dagger   ‡
%4   section symbol  §   5   paragraph   ¶   6   parallel lines  \\
%7   two asterisks   **  8   two daggers ††  9   two double daggers  ‡‡

%%% directory with Figures

\begin{document}

\def\spacingset#1{\renewcommand{\baselinestretch}%
{#1}\small\normalsize} \spacingset{1}

%%%%%%%%%%%%%%%%%%%%%%%%%%%%%%%%%%%%%%%%%%%%%%%%%%%%%%%%%%%%%%%%%%%%%%%%%%%%%%

\if0\blind
{
  %\title{\bf Textual Positioning: The Mosaic of Influences}
\title{A Structural Text-Based Scaling Model for Analyzing Political Discourse} 
\author{
  Jan Vávra\textsuperscript{1,2}  \hspace{0.5em} Bernd Prostmaier\textsuperscript{1} \hspace{0.5em} Bettina Grün\textsuperscript{2} \hspace{0.5em} Paul Hofmarcher\textsuperscript{1} 
  \vspace{2em}
  \date{
  \textsuperscript{1}Department of Economics,\\
  Paris Lodron University Salzburg\\\vspace{0.5em}
  \textsuperscript{2}Institute for Statistics and Mathematics,\\
  WU Vienna University of Economics and Business\\\vspace{0.5em}
  % \textsuperscript{3}\emph{BMW Group AG}\\[3ex]
  \today
  }
}
\maketitle
} \fi
\if1\blind
{
\title{A Structural Text-Based Scaling Model for Analyzing Political Discourse} 
\author{
  \vspace{2em}
  \date{
  \today
  }
}
\maketitle
} \fi

%\bigskip
\begin{abstract}
%% 200 words
Scaling political actors based on their individual characteristics and behavior helps profiling and 
grouping them as well as understanding changes in the political landscape. In this paper we introduce the Structural Text-Based Scaling (STBS) model to infer ideological positions of speakers for latent topics from text data. We expand the usual Poisson factorization specification for topic modeling of text data and use flexible shrinkage priors to induce sparsity and enhance interpretability. We also incorporate speaker-specific covariates to assess their association with ideological positions. Applying STBS to U.S.~Senate speeches from Congress session 114, we identify immigration and gun violence as the most polarizing topics between the two major parties in Congress. Additionally, we find that, in discussions about abortion, the gender of the speaker significantly influences their position, with female speakers focusing more on women's health. We also see that a speaker's region of origin influences their ideological position more than their religious affiliation.
% from session 114 allows us to infer key drivers of polarization and to explore how speaker-specific characteristics influence their ideological positions. The STBS model thus contributes to a more comprehensive understanding of political discourse.
\end{abstract}

\noindent%
{\it Keywords:}  ideal point, Poisson factorization, political discourse, topic model, variational inference, partisanship
% 3 to 6 keywords, that do not appear in the title
\vfill

\newpage
%\spacingset{1.45} % DON'T change the spacing!
%\spacingset{1.9} % DON'T change the spacing!
\spacingset{1.0} % DON'T change the spacing!
\section{Introduction}
% In this article we present the structural text based scaling model (STSM) which allows to detect ideological positions of lawmakers on topical level but also allows to incorporate speaker-specific covariates which might influence their ideological positions. 

Estimating ideological positions of lawmakers has a long tradition in political science. 
\citet{Poole:1985} proposed a ``scaling procedure'' to estimate ideological positions of lawmakers based on their voting behavior. Dynamic weighted nominal three-step estimation \citep{McCarty+Poole+Rosenthal:1997}, an extension of this procedure, results in the DW-Nominate scores that are widely accepted as benchmark ideological positions both on party level as well as on individual level \citep[see, e.g.,][]{Nominatescore, vox, vox2}.
Legislative votes, however, provide limited information on the latent ideological positions because voting behavior on individual level is often not documented and lawmakers rarely diverge from party-line voting due to robust party discipline \citep{Hug_2010}. Consequently, roll-call analysis for inferring the ideological positions adopted by legislators both within and across parties is of limited value \citep[see, e.g.,][]{wordshoal_2016}.%%von wordshoal paper

Text-based scaling models are a promising alternative method to discern ideological stances based on political discussions. 
The \emph{wordfish} model \citep{Slapin_2008} 
is specifically tailored to infer the political positions of authors from their texts but requires political texts about a single issue/topic as input.
\cite{wordshoal_2016} extend this approach to multiple issues/topics by proposing the \emph{wordshoal} model which allows to analyze a collection of labeled texts. The \emph{wordshoal} model fits separate models to each group of texts with the same label and combines the fitted models in a one-dimensional factor analysis to obtain ideological positions. 

\cite{Vafa_etal_2020} propose the text-based ideal point model (TBIP), which constitutes an extension of \cite{wordshoal_2016} and which merges the idea of topic models \citep[see, e.g.,][]{blei_etal_2003} and text-based scaling models. TBIP does not require single issues/topics to be assigned to the texts which makes this method applicable to corpora where no topic labeling of the speeches exists. The model simultaneously learns both, ideological positions (referred to as ideal points or scaling positions) for the authors and latent topics of the texts and thus allows inferring how the per-topic word choice changes as a function of the political position of the author. Hereby the model relies on the assumption that the latent ideological position of an author is the same across all topics. This assumption of one identical ideological position across all topics is, however, very restrictive, as it does not allow a speaker to be more or less polarizing across topics. % Also \citet{Taddy_etal_2019} find that polarization varies across topics in political speech using a supervised language model.

%\cite{Hofmarcher+Adhikari+Gruen:2022} apply the TBIP model by aggregating the single scaling positions to a measure of partisanship between parties. In line with 
%\citet[][]{Taddy_etal_2019}, they find that partisanship increased in recent years. 
% STM is designed to uncover the thematic structure of a corpus of texts, identifying latent topics and how they are associated with document-level metadata, such as political party affiliation or other covariates.

Further, focus is usually not only on inferring the ideological positions of individual authors, but to perform a structural analysis and assess how specific author characteristics impact on the ideological positions, e.g., how gender, religion and political affiliation may influence the ideal point of a speaker. In the case of roll-call votes, e.g., \cite{Swers:1998, Welch:1985} study the effect of gender on political ideology, and \cite{Fastnow} discuss the influence of religion on political behavior using DW-Nominate scores.

This paper proposes the structural text-based scaling (STBS) model which addresses both mentioned caveats: First, the model estimates topic-specific ideological positions of political actors, i.e., the scaling position for one topic may differ from that of another topic for a given author. Second, the model incorporates author-specific covariates to characterize the topic-specific ideological positions.
Because of this second extension, we refer to this model as the \emph{structural} text-based scaling (STBS) model in reference to \cite{Roberts+Stewart+Tingley:2013, Roberts_etal_JASA:2016} who introduced the structural topic model (STM) which allows to include covariates to characterize the word distributions and the topic distributions. 
Including these extensions results in a model which enables a more nuanced and holistic understanding of how various factors shape and influence ideological positions within the realm of political discourse, ultimately contributing to a richer and more comprehensive analysis of political language. 

Our model builds on Poisson factorization topic models \citep{Gopalan:2014} to introduce topic-specific scaling positions as well as speaker-specific covariates to characterize these positions.
Classical topic models \citep{blei_etal_2003, Roberts_etal_JASA:2016} estimate per document topic proportions $\mathbf{\theta}$ and per topic term distributions $\mathbf{\beta}$. In a Poisson factorization topic model the products thereof serve as Poisson rates for the observed counts of terms on document-term level. We will refer to this product as the neutral Poisson rates. 
To account for ideological positions on topic level, we introduce scaling factors to the neutral Poisson rates which result from the speaker-specific topical scaling and the polarity rates of the vocabulary. 
These scaling factors capture the idea of framing, i.e., that the communicator will emphasize certain aspects through word usage and that specific term usage when discussing a topic conveys political messages \citep[see, e.g.,][]{Entman}. % This implies that an author's word choice for a particular topic is affected by their ideological position.

We carefully select the hierarchical prior structure for the latent variables in our model to achieve modeling aims and ensure computational efficiency. Specifically, for topic prevalence, we make use of gamma distributions with author-specific rates to account for differences in speech length. The topical content $\mathbf{\beta}$ is modeled through a hierarchical structure to allow for a very flexible topic-term distribution. The neutral Poisson rates are scaled through term-specific polarity rates and author-specific scaling rates and we impose a hierarchical shrinkage prior on the polarity rates and on the scaling rates by imposing the triple gamma prior \citep{CadonnaFruehwirth-SchnatterKnaus_triple_gamma:2019} on regression coefficients corresponding. In particular, this prior specification aligns with the modeling aim that only a few terms are crucial to determine the polarity of a topic. 

As with other topic models or scaling models, the exact posterior for the proposed model is intractable. We make use of variational inference \cite[VI;][]{BleiVI:2017} to approximate the posterior distribution of the model parameters. 
% VI is a very flexible and fast method to approximate the posterior distribution in cases where the set of parameters is large, and VI is widely used in text analysis \citep{BleiVI:2017, Vafa_etal_2020}. 
In our proposed VI algorithm, we update as many parameters as possible using coordinate ascent variational inference \citep{gopalanhofmanblei2015scalableHPoisF} and the remaining parameters using automatic differentiation gradient descent \citep{Vafa_etal_2020}. To facilitate the use of the model, we provide a GitHub repository \citep{Github_STBS} which implements model fitting, summarization and visualization.

We demonstrate our model using U.S.\ Senate speeches from the 114th Congress session. This dataset has been made available by \cite{Gentzkow+Shapiro+Taddy:2018} and used in previous studies to investigate in particular the polarization or partisanship between the two major parties as manifested in different word usage \citep{Gentzkow_etal_2019, Stewart_2023, Vafa_etal_2020}. Using our model, we are interested in characterizing which topics of political discourse are the main drivers of polarization and we identify immigration and gun violence to be the most polarizing topics. We also evaluate the effect of how speaker-specific covariates like gender, experience, region of origin or religious affiliation influence the scaling position of the latent topics. For example, we find that the gender of the speaker significantly influences their position when discussing the topic about abortion.
%Our model identifies the topics about immigration and gun violence as the most polarizing ones between the two major parties in Congress. Additionally, we find that the gender of the speaker significantly influences their position when discussing the topic about abortion, where female speakers often focus on women's health, while male Republicans tend to frame the conversation around the unborn child to foster emotional feelings. We also find that the speaker's region of origin has a greater influence on their ideology than their religious affiliation.

The paper is structured as follows: In Section~\ref{sec:STBSM}, we present the STBS model including the data generative part and the hierarchical prior specification. We provide a variational inference algorithm to approximate the posterior distribution of the model parameters in Section~\ref{sec:inference}. Section~\ref{sec:senate} provides the empirical results when applying our model to the speeches given in the 114th session of the U.S.~Senate. We present our results and compare them to the TBIP model \citep{Vafa_etal_2020}. We conclude in Section~\ref{sec:conclusion}.

\section{Structural text-based scaling (STBS) model} \label{sec:STBSM}

% Political science has commonly employed scaling models, such as \emph{wordfish} and \emph{wordshoal}, to estimate ideological positions on a one-dimensional scale based on text data. Statistics and computer science make use of topic models, like Latent Dirichlet Allocation \citep[LDA;][]{blei_etal_2003} and extensions thereof, to analyze text data by clustering words into topics and representing documents as mixtures of topics. The integration of both approaches  was put forward by  \cite{Vafa_etal_2020} with the TBIP model which bridges scaling methods with topic modeling to identify ideological positions and hidden topical content in texts. 

% The Structural Text-Based Scaling (STBS) model contributes to this literature by two extensions: First, building on the ideas of  structural topic models \citep{Roberts_etal_JASA:2016}, our model includes author-specific covariates to characterize the ideological positions based on a regression model. Second, the STBS model relaxes the assumption that the ideological position is the same for a specific author across all latent topics and allows for topic-specific ideological positions of each author. We combine this flexible model formulation with hierarchically well-structured prior distributions to regularize the model specification and include prior model assumptions.

We first specify the data structure required for STBS. Our data input consists of a~corpus of $D$ documents (speeches) together with information on the author (speaker) of each document.
Based on the bag-of-words assumption, the corpus is represented by a document-term matrix $\mathbb{Y}$ of shape $D \times V$, where $V$ is the number of unique terms (e.g., bigrams) in the corpus. Each entry of the document-term matrix represents the frequency count $y_{dv}$, i.e., the number of appearances of term $v \in \{1, \ldots, V\}$ in document $d \in \{1, \ldots, D\}$. 
The document-term matrix usually is a sparse matrix; the most frequent element is zero given that both $D$ and $V$ are usually in the thousands and documents are rather short. There are $A$ authors of the documents, and $a_d \in \{1, \ldots, A\}$ indicates the author of document $d$. For each author a vector of author-specific covariates of length $L$, $\bm x_a$ is available. These covariates are combined to the covariate matrix $\mathbb{X}$.

The elements $y_{dv}$ of $\mathbb{Y}$ are assumed to follow independent Poisson distributions, i.e., 
\begin{equation}\label{eq:Pois_counts}
    y_{dv} \sim \Pois{\lambda_{dv}} 
    \qquad \text{where} \qquad 
    \lambda_{dv} = \sum\limits_{k=1}^K \lambda_{dkv}
    \qquad \text{and} \qquad
    \lambda_{dkv} = \theta_{dk} \beta_{kv} \exp\{\eta_{kv} \ideal_{a_d k}\},
\end{equation}
with $K$ the number of latent topics.
The positive per-document topic intensities $\theta_{dk}$ reflect the contribution of latent topic~$k$ to the frequency counts of document~$d$. The term intensities of each topic for author $a_d$ are given by $\beta_{kv} \exp\{\eta_{kv} \ideal_{a_d k}\}$. Each author~$a$ has ideological positions $\ideal_{ak}$ for $k=1,\ldots,K$, which are specific to each topic. 

For an author~$a$ with latent ideological position $\ideal_{ak} = 0$ towards topic~$k$, the term intensities are captured solely by the intensities $\beta_{kv}$. Hence, we refer to them as \emph{neutral} term intensities. An ideologically shifted author~$a$ with $\ideal_{ak} \neq 0$ towards topic~$k$ has the neutral term intensities multiplied by the \emph{ideological factor} $\exp \{ \eta_{kv} \ideal_{ak} \}$, where $\eta_{kv}$ are the corresponding term-specific \emph{polarity values} for a unit change in ideological position $\ideal_{ak}$.\footnote{Note the possible reparametrization $\eta_{kv} \ideal_{ak} = (- \eta_{kv}) (- \ideal_{ak})$ that brings identifiability issues. Hence, prior to the analysis the meaning of having a positive position has to be fixed.}
For a given topic $k$, an author with an ideological position $\ideal_{ak}>0$ will use terms more often than a neutral speaker, if those terms have a polarity score $\eta_{kv}>0$. If the ideological position and the polarity score have opposite signs, the frequency of these terms is decreased compared to a neutral speaker. 
To give an intuitive example, a topic $k$ about gun violence may have \emph{gun violence} or \emph{mass shooting}  as frequent neutral terms (i.e., high $\beta_{kv}$ values). A person who aligns ideologically with liberal gun laws will also include terms such as \emph{individual freedom} and \emph{amendment rights} when talking about this topic. By contrast, a person who wants more restrictive gun laws will include terms like \emph{background checks} or \emph{watch list}. The term $\exp\{\eta_{kv} \ideal_{a_d k}\}$ captures those nuances. 

For the ideological positions $\ideal_{ak}$ on topic level, we assume that they can be characterized by author-specific covariates, i.e.,
 \begin{equation}
     \label{eq:Pois_counts2}
    \ideal_{ak} \sim \norm{\T{\bm x_a} \bm \iota_k}{\left(I_a^2\right)^{-1}},
 \end{equation}
where $\bm x_a$ is the covariate vector of author~$a$, $\bm \iota_k$ are the regression coefficients specific to topic~$k$ and $I_a^2$ are the author-specific precisions which capture how close the author-specific ideological positions are to the predicted ideological positions given the covariates. Having topic-specific regression coefficients $\bm \iota_k$ allows covariate effects to differ across topics. For example, gender might not impact word choice for most topics but when it comes to discussing women rights clear differences in word usage are induced by gender. This feature thus allows us to differentiate across topics how covariates impact the ideological positions.

% \textcolor{red}{A paragraph explaining the relationship between $\eta$ and $\ideal$. Supported by some specific example. }

In STBS, the data generative Equations~\eqref{eq:Pois_counts} and \eqref{eq:Pois_counts2} are complemented by a careful choice of hierarchical priors on the parameters. This hierarchical prior structure increases the flexibility of the model, helps identify the parameters of interest and -- through the inclusion of structural parameters -- eases interpretation. 

Documents and authors may differ in their verbosity. To prevent that the ideological positions capture differences in verbosity between authors, \citet{Vafa_etal_2020} multiplied the Poisson rate $\lambda_{dv}$ with a fixed \emph{verbosity} term specific to each author. %\textbf{Whats the problem with VAFA approach - maybe one sentence why we need a more flexible one} 
We propose a~more flexible approach inspired by scalable Poisson factorization \citep{gopalanhofmanblei2015scalableHPoisF}, where a~hierarchical prior for the intensities is used. 
\citet{gopalanhofmanblei2015scalableHPoisF} specified the document intensities to be a-priori gamma-distributed with fixed hyperparameters, e.g., $\theta_{dk} \sim \gammadist{0.3}{0.3}$. We extend this to account for differences in document length by using a specification with fixed shape $a_\theta$ and author-specific rates $b_{\theta a}$, i.e., $\theta_{dk} \sim \gammadist{a_\theta}{b_{\theta a_d}}$. This allows for different verbosity between authors. The rates $b_{\theta a}$ are additional model parameters with prior $b_{\theta a} \sim \gammadist{a_\theta'}{\frac{a_\theta'}{b_\theta'}}$, where both hyperparameters are fixed.

%$b_{\theta d}$, i.e., $\theta_{dk} \sim \gammadist{a_\theta}{b_{\theta d}}$. The latent auxiliary rates $b_{\theta d}$ are additional model parameters with prior $b_{\theta d} \sim \gammadist{a_\theta'}{\frac{a_\theta'}{b_\theta'}}$ where both hyperparameters are fixed. 

We also impose a hierarchical prior structure on the neutral term intensities $\beta_{kv}$ to capture differences in the frequency of term usage. 
The following hierarchical structure is imposed on the neutral topic-term frequencies $\beta_{kv}$. 
The simple gamma prior with fixed hyperparameters is extended to a gamma prior with flexible $v$-rates, i.e., $\beta_{kv} \sim \gammadist{a_\beta}{b_{\beta v}}$ and $b_{\beta v} \sim \gammadist{a_\beta'}{\frac{a_\beta'}{b_\beta'}}$. 
If a term $v$ is very popular in the corpus, the hierarchical prior accounts for this by a lower rate $b_{\beta v}$.
%\textbf{should we add one sentence why we need this or cite a paper.}.

We assume that only a few terms are crucial to express ideology. Therefore, we use a~shrinkage prior for the polarity values $\eta_{kv}$. This shrinkage prior has a spike at zero which is achieved by adjusting the precision parameter $\rho_k^2$ for the normal distribution and imposing a~gamma prior. Additional flexibility is obtained by adding a further level to the hierarchical prior distribution and imposing a prior on the rate of the gamma prior:
\begin{equation}\label{eq:prior_eta}
    \eta_{kv} \sim \norm{0}{\left(\rho_k^2\right)^{-1}} 
	\quad \text{with} \quad 
	\rho_k^2 \sim \gammadist{a_\rho}{b_{\rho k }}
	\quad \text{and} \quad 
	b_{\rho k } \sim \gammadist{a_\rho'}{\frac{\kappa_\rho^2}{2} \frac{a_\rho'}{a_\rho}}.
\end{equation}
This implies a very flexible prior structure which is equivalent \citep[see][]{Knaus:2023dynamic} to the triple-gamma prior by \citet{CadonnaFruehwirth-SchnatterKnaus_triple_gamma:2019}:
\begin{equation}\label{eq:prior_triple_gamma}
    \eta_{kv}^2 \sim \gammadist{\frac{1}{2}}{\frac{1}{2\sigma_k^2}}
    \; \text{and} \;
    \sigma_k^2 \sim \gammadist{a_\rho'}{\frac{a_\rho' b_{\sigma k}}{2}}
    \; \text{and} \;
    b_{\sigma k} \sim \gammadist{a_{\rho}}{\frac{a_{\rho}}{\kappa_\rho^2}},
\end{equation}
where the prior structure is based on the variances $\sigma_k^2 = \left(\rho_k^2\right)^{-1}$ instead of the precisions. With $a_\rho'<1$, the marginal density has a spike at zero and $b_{\sigma k}$ controls the amount of shrinkage, see \citet{Bitto_Fruehwirth-Schnatter_double_gamma:2019}. The triple-gamma prior makes the shrinkage flexible for each topic by having $b_{\sigma k}$ vary with $k$. The expected value $\E b_{\sigma k} = \kappa_\rho^2$ is fixed. 

%% hab ich rausgegeben
% This flexible shrinkage structure allows us to infer whether a topic is relevant for polarization. A large value of $b_{\sigma k}$ shrinks the polarity values completely towards zero, i.e., no ideological corrections are needed for this topic. By contrast, a small value of $b_{\sigma k}$ induces a non-negligible probability of obtaining $\eta_{kv}$ values far from zero. 
% Thus, together with the precision of the ideal points $I_a^2$, $b_{\sigma k}$ captures the overall polarity of a single topic. 

% Finally, we discuss the author-topic specific ideological positions $\ideal_{ak}$. 
Finally, continuing with the regression framework for the ideological positions $\ideal_{ak}$ specified in Equation~\eqref{eq:Pois_counts2}, we assume a gamma prior on the author-specific precisions  $I_a^2$, i.e.,  
\begin{equation} \label{eq:prior_ideal}
%     \ideal_{ak} \sim \norm{\T{\bm x_a} \bm \iota_k}{\left(I_a^2\right)^{-1}}
%     \quad \text{and} \quad
     I_a^2 \sim \gammadist{a_I}{b_I}.  
 \end{equation}

To capture the overall effect of a covariate on the ideological positions, we use the following prior structure:
\begin{equation}\label{eq:prior_iota}
    \iota_{kl} \sim \norm{\iota_{\bullet l}}{\left(\omega_l^2\right)^{-1}} 
    \qquad \text{and} \qquad
    \iota_{\bullet l} \sim \norm{0}{1},
\end{equation}
where the centrality parameters $\iota_{\bullet l}$ express the aggregated effect of covariate $l \in \{1, \ldots, L\}$ and $\omega_l^2$ are the precisions which are specific to each covariate. The lower the precision, the more does the effect of this covariate differ across topics. Assuming that the majority of the topics share a very similar effect for a covariate and only few topics substantially differ, we impose again the triple-gamma prior~\eqref{eq:prior_eta} yielding:
\begin{equation}\label{eq:prior_omega}
    \omega_l^2 \sim \gammadist{a_\omega}{b_{\omega l}}
    \qquad \text{and} \qquad 
    b_{\omega l} \sim \gammadist{a_\omega'}{\frac{\kappa^2_\omega}{2} \frac{a_\omega'}{a_\omega}}.
\end{equation}
In this case the shrinkage of the coefficients is not towards zero, but towards the centrality parameters $\bm \iota_{\bullet l}$.

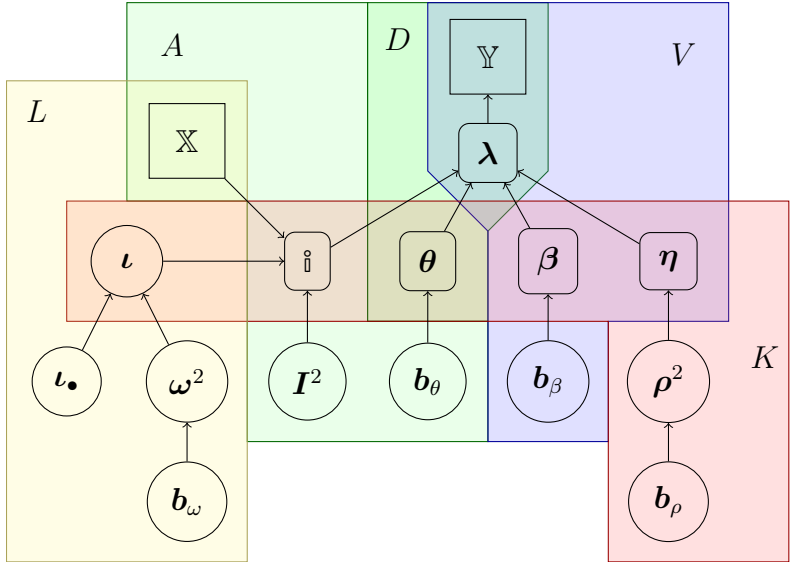
\begin{figure}[!t]
	\centering
    \begin{tikzpicture}[x=0.8cm,y=0.8cm](13.2,9.3)
        \linethickness{0.075mm}
        % borders of the picture
        %\draw (7,-5) -- (7,-5) -- (7,5) -- (-7,5) -- cycle;
        \clip (-8.1,-5) rectangle (5.1,4.3);
        %% Plates - D, A, V, K, L
        \filldraw[fill=green!80!white, draw=green!40!black, fill opacity=0.2] (-2,4.3) -- (1,4.3) -- (1,1.5) -- (0,0.5) -- (0,-1) -- (-2,-1) -- cycle;
        \filldraw[fill=green!40!white, draw=green!40!black, fill opacity=0.2] (-2,4.3) -- (-2,-1) -- (0,-1) -- (0,-3) -- (-4,-3) -- (-4,-1) -- (-4,01) -- (-6,1) -- (-6,4.3) -- cycle;
        \filldraw[fill=blue!60!white, draw=blue!60!black, fill opacity=0.2] (4,4.3) -- (-1,4.3) -- (-1,1.5) -- (0,0.5) -- (0,-1) -- (0,-3) -- (2,-3) -- (2,-1) -- (4,-1) -- cycle;
        \filldraw[fill=red!60!white, draw=red!60!black, fill opacity=0.2] (-7,1) -- (5, 1) -- (5,-5) -- (2,-5) -- (2,-1) -- (-7,-1) -- cycle;
        \filldraw[fill=yellow!60!white, draw=yellow!60!black, fill opacity=0.2] (-8,3) -- (-4,3) -- (-4,-5) -- (-8,-5) -- cycle;
        %% Labels of the plates
        \draw (-1.5, 3.7) node {$D$};
        \draw (-5.25, 3.6) node {$A$};
        \draw (3.25, 3.45) node {$V$};
        \draw (4.6, -1.6) node {$K$};
        \draw (-7.5, 2.5) node {$L$};
        %% Nodes
        % outcomes Y + regressors X
        \draw (0,3.4) node(Y)[draw = black, rectangle, inner sep = 10pt] {$\mathbb{Y}$};
        \draw (-5,2) node(X)[draw = black, rectangle, inner sep = 10pt] {$\mathbb{X}$};
        % Poisson intensities lambda
        \draw (0,1.8) node(lambda)[draw = black, rectangle, rounded corners, inner sep = 7 pt] {$\bm \lambda$};
        % Primary parameters
        \draw (-3,0) node(ideal)[draw = black, rectangle, rounded corners, inner sep = 7 pt] {$\bm \ideal$};
        \draw (-1,0) node(theta)[draw = black, rectangle, rounded corners, inner sep = 7 pt] {$\bm \theta$};
        \draw (1,0) node(beta)[draw = black, rectangle, rounded corners, inner sep = 7 pt] {$\bm \beta$};
        \draw (3,0) node(eta)[draw = black, rectangle, rounded corners, inner sep = 7 pt] {$\bm \eta$};
        % Other auxiliary parameters
        \draw (-6,0) node(iota)[draw = black, circle, inner sep = 7pt] {$\bm \iota$};
        \draw (-3,-2) node(I)[draw = black, circle, inner sep = 5pt] {$\bm I^2$};
        \draw (-1,-2) node(btheta)[draw = black, circle, inner sep = 5pt] {$\bm b_\theta$};
        \draw (1,-2) node(bbeta)[draw = black, circle, inner sep = 5pt] {$\bm b_\beta$};
        \draw (3,-2) node(rho)[draw = black, circle, inner sep = 5pt] {$\bm \rho^2$};
        \draw (3,-4) node(brho)[draw = black, circle, inner sep = 5pt] {$\bm b_\rho$};
        \draw (-7,-2) node(iotamean)[draw = black, circle, inner sep = 5pt] {$\bm \iota_\bullet$};
        \draw (-5,-2) node(omega)[draw = black, circle, inner sep = 5pt] {$\bm \omega^2$};
        \draw (-5,-4) node(bomega)[draw = black, circle, inner sep = 5pt] {$\bm b_\omega$};
        %% Directions
        \draw [->] (lambda) -- (Y);
        \draw [->] (ideal) -- (lambda);
        \draw [->] (theta) -- (lambda);
        \draw [->] (beta) -- (lambda);
        \draw [->] (eta) -- (lambda);
        \draw [->] (iota) -- (ideal);
        \draw [->] (I) -- (ideal);
        \draw [->] (X) -- (ideal);
        \draw [->] (btheta) -- (theta);
        \draw [->] (bbeta) -- (beta);
        \draw [->] (rho) -- (eta);
        \draw [->] (iotamean) -- (iota);
        \draw [->] (omega) -- (iota);
        \draw [->] (bomega) -- (omega);
        \draw [->] (brho) -- (rho);
    \end{tikzpicture}
    \caption{\label{fig:hierarchy_STBSM} 
		The STBS model. Observed data (rectangles), latent variables of interest (rounded corners), additional parameters (circles). Colored planes indicate the relationship to documents ($D$), authors ($A$), the vocabulary ($V$), topics ($K$) and covariates ($L$).
	}
\end{figure} 

An overview of the STBS model is given by the graphical model in Figure~\ref{fig:hierarchy_STBSM}. The figure indicates the observed data, the latent variables of interest as well as the additional parameters present due to the imposed hierarchical prior structure. 

\section{Inference} \label{sec:inference}

The STBS model contains the following latent variables of interest: topic intensities $\bm \theta$, neutral word intensities $\bm \beta$, polarity values $\bm \eta$ and ideological positions $\bm \ideal$. In addition, we have the following parameters: document(author)-specific prior rates $\bm b_\theta$ for $\bm \theta$, term-specific prior rates $\bm b_\beta$ for $\bm \beta$, topic-specific precisions $\bm \rho^2$ for polarity values $\bm \eta$ and their prior rates $\bm b_\rho$, topic-specific regression coefficients $\bm \iota$ for $\bm \ideal$, author-specific precisions $\bm I^2$ of the regression model, prior means for each regression coefficient $\iota_{\bullet}$ and their precisions $\bm \omega^2$ with prior rates $\bm b_{\omega }$. Altogether, the posterior distribution of interest is given by $p( \bm \zeta | \mathbb{Y}, \mathbb{X}) = p(\bm \theta, \bm \beta, \bm \eta, \bm \ideal, \bm b_\theta, \bm b_\beta, \bm \rho^2, \bm b_\rho, \bm \iota, \bm I^2, \bm \iota_{\bullet}, \bm \omega^2, \bm b_{\omega} | \mathbb{Y}, \mathbb{X})$. 

Due to a mostly conditionally conjugate prior specification, Markov chain Monte Carlo sampling based on Metropolis within Gibbs sampling could be utilized to approximate the posterior distribution if it were not for the high dimensionality of the latent variables and the model parameters. The high dimensionality implies that a sampling-based approach is prohibitively expensive for a large corpus, both time and memory-wise. Hence, we use variational inference to approximate the posterior distribution, similar to previous work on this type of models. In the following, we present the variational inference scheme used including the variational families, discuss the combination of stochastic gradient ascent with coordinate ascent used in the optimization algorithm, provide the post-processing and final inference step and outline implementation details.

%%%!!!!!!!!!!!!!!!!!!!!!!!!!!!!!!!!!!!!!!!!!!!!!!!!!!!!!!!!!!!!!!!!!!!!!!!%%%
%%%%%%%%%%%%%%%%%%%%%%%%%%%%%%%%%%%%%%%%%%%%%%%%%%%%%%%%%%%%%%%%%%%%%%%%%%%%%
%%%%%%%%%%%%               NEW SUBSECTION                   %%%%%%%%%%%%%%%%%
%%%%%%%%%%%%%%%%%%%%%%%%%%%%%%%%%%%%%%%%%%%%%%%%%%%%%%%%%%%%%%%%%%%%%%%%%%%%%
%%%!!!!!!!!!!!!!!!!!!!!!!!!!!!!!!!!!!!!!!!!!!!!!!!!!!!!!!!!!!!!!!!!!!!!!!!%%%
\subsection{Variational inference} \label{subsec:variational_inference}

The variational approach transfers the inference problem into an optimization problem. One approximates the posterior distribution by a flexible variational family of distributions, i.e., one aims at finding the set of variational parameters $\bm \phi \in \bm \Phi$ such that the corresponding variational distribution $q_{\bm \phi}$ resembles the posterior distribution the most. In particular, one aims to determine $\bm \phi$ minimizing the Kullback-Leibler divergence of $q_{\bm \phi}$ from the posterior. 

Equivalently, this problem is solved by maximizing the evidence lower bound ($\mathsf{ELBO}$):
\begin{equation} \label{eq:ELBO}
    \ELBO{\bm \phi} = \Eq \left[ \log p(\mathbb{Y} | \bm \zeta) + \log p (\bm \zeta | \mathbb{X}) + (-\log q_{\bm \phi} (\bm \zeta)) \right],
\end{equation}
where $\Eq$ is the expectation with respect to the variational distribution with parameter $\bm \phi$. The three summands are referred to as \emph{reconstruction}, \emph{log-prior} and  \emph{entropy} of the variational distribution $q_{\bm \phi}$.   

As variational family, we consider the mean-field family which assumes independence between all parameters and latent variables. For each parameter or latent variable, a~distributional family described by two parameters is selected that aligns with the prior distribution for this variable. We choose either normal or gamma families since their conjugacy yields direct updates for the coordinate ascent algorithm. 
% This is different from \citet{Vafa_etal_2020} where log-normal densities were used for positive latent variables and parameters. However, \citet{Vafa_etal_2020} did also not use direct updates based on the coordinate ascent algorithm.\footnote{Their updated implementation of the TBIP model available on Github~\citep{Github_TBIP} also combines SVI with CAVI updates facilitated by the use of gamma priors.
We impose a multivariate normal distribution as variational family with a general covariance matrix on the regression coefficients to relax the independence assumption.
In particular, we posit the following independent variational families:
\begin{equation} \label{eq:meanfield_var_family}
	\begin{split}
		\begin{aligned}
			q(\theta_{dk}) &= \gammadist{\phi_{\theta dk}^\shp}{\phi_{\theta dk}^\rte}, 
			\quad&
			q(\beta_{kv}) &= \gammadist{\phi_{\beta kv}^\shp}{\phi_{\beta kv}^\rte},
			\quad&
			q(\eta_{kv}) &= \norm{\phi_{\eta kv}^\loc}{\phi_{\eta kv}^\sclsq},
			\\
			q(\ideal_{ak}) &= \norm{\phi_{\ideal ak}^\loc}{\phi_{\ideal ak}^\sclsq},
            \quad&
            q(b_{\theta a}) &= \gammadist{\phi_{b \theta a}^\shp}{\phi_{b \theta a}^\rte},
            \quad&
            q(b_{\beta v}) &= \gammadist{\phi_{b \beta v}^\shp}{\phi_{b \beta v}^\rte},
            \\
            q(\rho_k^2) &= \gammadist{\phi_{\rho k}^\shp}{\phi_{\rho k}^\rte}, 
            \quad&
            q(\omega_l^2) &= \gammadist{\phi_{\omega l}^\shp}{\phi_{\omega l}^\rte},
            \quad&
            q(b_{\rho k}) &= \gammadist{\phi_{b\rho k}^\shp}{\phi_{b\rho k}^\rte}, 
            \\
            q(b_{\omega l}) &= \gammadist{\phi_{b\omega l}^\shp}{\phi_{b\omega l}^\rte},
            \quad&
            q(\bm \iota_{k}) &= \knorm{L}{\bm \phi_{\iota k}^\loc}{\bm \phi_{\iota k}^\covm}, 
            \quad&
            q(\bm \iota_{\bullet}) &= \knorm{L}{\bm \phi_{\iota\bullet}^\loc}{\bm \phi_{\iota\bullet}^\covm},
            \\
            &\quad&
            q(I_a^2) &= \gammadist{\phi_{Ia}^\shp}{\phi_{Ia}^\rte},
		\end{aligned}
	\end{split}
\end{equation}
where $\phi_\bullet^\shp$ and $\phi_\bullet^\rte$ are the shape and rate parameters of the gamma distribution, $\phi_\bullet^\loc$, $\phi_\bullet^\sclsq$ and $\bm \phi_\bullet^\covm$ are the location, variance and covariance matrix parameters of the (multivariate) normal distribution. 

%%%!!!!!!!!!!!!!!!!!!!!!!!!!!!!!!!!!!!!!!!!!!!!!!!!!!!!!!!!!!!!!!!!!!!!!!!%%%
%%%%%%%%%%%%%%%%%%%%%%%%%%%%%%%%%%%%%%%%%%%%%%%%%%%%%%%%%%%%%%%%%%%%%%%%%%%%%
%%%%%%%%%%%%               NEW SUBSECTION                   %%%%%%%%%%%%%%%%%
%%%%%%%%%%%%%%%%%%%%%%%%%%%%%%%%%%%%%%%%%%%%%%%%%%%%%%%%%%%%%%%%%%%%%%%%%%%%%
%%%!!!!!!!!!!!!!!!!!!!!!!!!!!!!!!!!!!!!!!!!!!!!!!!!!!!!!!!!!!!!!!!!!!!!!!!%%%
\subsection{Stochastic gradient ascent within coordinate ascent} \label{subsec:SVI_within_CAVI}

Our inference problem has been cast as an optimization problem for $\ELBO{\bm \phi}$ w.r.t.\ $\bm\phi$. We solve this optimization problem using a stochastic automatic differentiation gradient ascent within coordinate ascent procedure, i.e., we combine ideas from stochastic variational inference \citep[SVI;][]{Hoffman+Blei+Wang:2013},
automatic differentiation variational inference \citep[ADVI;][]{Kucukelbir+Tran+Ranganath:2017} and the optimization scheme employed in \citet{Vafa_etal_2020}. We pursue a stochastic gradient ascent approach where we split the complete parameter vector into blocks and where one block is updated by moving it in the direction of a noisy gradient while keeping the remaining parameters fixed making use of data subsampling. 

The update step varies depending on if the step is analytically available in closed form or needs to be obtained using automatic differentiation for the gradient. In the case, where the step is analytically available in closed form, we make use of SVI which is based on coordinate ascent variational inference (CAVI).
Due to well-formulated conjugacy, the CAVI approach is feasible for the parameters in the STBS model which correspond to parameters in the hierarchical Poisson factorization model \citep{gopalanhofmanblei2015scalableHPoisF} and for the regression coefficients. CAVI update steps are not available in closed form under non-conjugacy. This is the case in our model for the ideological factor $\exp\{\eta_{kv}\ideal_{ak}\}$ in the Poisson rates in~\eqref{eq:Pois_counts}. Hence, we employ a variant of ADVI for $\ideal$ and $\bm{\eta}$, i.e., gradients are determined based on automatic differentiation in combination with the Adam optimizer \citep{kingma_ba2015Adam}. Also the expectation of the ideological factor -- which is required for the CAVI updates of other parameters -- deserves a~special treatment, with more details given in Appendix~\ref{app:expectedterm} and \ref{app:stochastic-gradients}.

Due to the high dimensionality of the full parameter vector, we prefer such a differentiated approach to using a fully automatic procedure which would ignore the availability of CAVI updates.
% Most of the parameters are updated directly based on CAVI. 
Some of these updates are direct generalizations of updates in \citet{gopalanhofmanblei2015scalableHPoisF}, the updates for the shrinkage triple gamma prior follow naturally from our conjugate formulation~\eqref{eq:prior_eta}, see Appendix~\ref{app:cavi} for details. 
%The remaining parameters are updated using ADVI with the Adam optimizer \citep{kingma_ba2015Adam}. %as for the original TBIP model \citep{Vafa_etal_2020}.
%Similar compilation has already appeared in the literature for different models (references).

We estimate the hierarchical Poisson factorization model with the same number of topics $K$ to obtain initial estimates $\bm \phi_\theta^0$ and $\bm \phi_\beta^0$. Ideological positions are initialized using reasonable values to fix the meaning of the polarity, e.g., for the U.S.\ Senate speeches $\ideal_{ak} \in \{-1,0,1\}$ depending on the political party of the author~$a$. Other parameters are initialized randomly, see \citet{Vafa_etal_2020}. Then, we perform parameter updates for $E$ epochs. In each epoch, the documents are randomly split into batches of pre-specified size $|\mathcal{B}|$ ensuring that all documents contribute to the construction of stochastic gradients in each epoch. Given a~batch $\mathcal{B}$, only documents from this batch are used for updates of local and global variables and $\ELBO{\bm\phi}$ evaluation. This implies that the sums $\sum\limits_{d=1}^D$ are replaced by $\sum\limits_{d \in \mathcal{B}_b}$ and scaled by $\frac{D}{|\mathcal{B}_b|}$. As step size for the global CAVI updates, we choose $\rho_t = (t+\tau)^{-\kappa}$. This satisfies the Robbins-Monro condition if delay $\tau\ge 0$ and exponent $\kappa \in (0.5, 1]$. 

\begin{algorithm}[!t]
	\caption{Estimation combining CAVI and SVI.}
	\label{alg:TBIP_CAVI_and_SVI}
	\begin{algorithmic} [1]
		\State \textbf{Input:} $\mathbb{Y}$, $\mathbb{X}$, initial value $\bm \phi^0$, $E$, $|\mathcal{B}|$, $I$, $\kappa \in (0.5, 1]$, $\tau \geq 0$, $\alpha$, hyperparameter values.
        \State Initialize the step counter $t:= 0$.
		\For{$e$ in $1:E$}
		\State \textbullet~Divide $D$ documents randomly into $B$ batches $\mathcal{B}_b, b = 1, \ldots, B$ of size $|\mathcal{B}_b| \approx |\mathcal{B}|$.
            \For{$b$ in $1:B$}
            %\State \textbullet~Only the documents $d\in\mathcal{B}_b$ are available.
            %\State \textbullet~Use sums $\sum\limits_{d \in \mathcal{B}_b} \cdots$ scaled by $\frac{D}{|\mathcal{B}_b|}$ instead of $\sum\limits_{d=1}^D \cdots$.
            \State \textbullet~Set $t:= t+1$ and the step size $\rho_t = (t+\tau)^{-\kappa}$. 
            \State \textbullet~Update $\bm \phi_{\theta dk}^t := \widehat{\bm \phi}_{\theta dk}$ for $d \in \mathcal{B}_b$. 
            \Comment{CAVI \emph{local} updates}
                \For{$\phi$ in $\{\bm \phi_{b\theta}, \bm \phi_\beta, \bm \phi_{b\beta}, \bm \phi_\rho, \bm \phi_{b\rho}, \bm \phi_{\iota}, \bm \phi_{\iota\bullet}, \bm \phi_I, \bm \phi_\omega, \bm \phi_{b\omega} \}$}
                \State \textbullet~Update $\phi^t := \rho_t \widehat{\phi} + (1-\rho_t) \phi^{t-1} $. 
                \Comment{CAVI \emph{global} updates}
                \EndFor
            \State \textbullet~Approximate $\ELBO{\bm \phi}$ by replacing $\Eq f(\bm \zeta)$ with $\frac{1}{I} \sum\limits_{i=1}^I f(\bm \zeta^i)$ where $\bm\zeta^i \overset{\text{iid}}{\sim} q_{\bm \phi}$. 
            \State \textbullet~Track noisy gradients using reparametrization trick for $\bm \phi_\eta$ and $\bm \phi_\ideal$. 
            \State \textbullet~Update $\bm \phi_\eta$ and $\bm \phi_\ideal$ with Adam and learning rate $\alpha$. 
            \Comment{ADVI \emph{global} updates}
            \EndFor
		\EndFor
	\end{algorithmic}
\end{algorithm}

Algorithm~\ref{alg:TBIP_CAVI_and_SVI} summarizes the steps of our stochastic gradient ascent within coordinate ascent procedure. To simplify notation, $\widehat{\bm \phi}_\bullet$ denotes the CAVI update for parameter $\bm \phi_\bullet$ when all other parameters are fixed, for details see Appendix~\ref{app:cavi}. Note that variational parameters for $\bm \theta$ are updated directly without any convex combination with the previous value as they are local parameters. Other parameters are global since they are not document-specific. Updates of $\widehat{\bm \phi}_\beta$ and $\widehat{\bm \phi}_{b\theta}$ depend on $\bm \phi_\theta$ but only the subset from the current batch is used after appropriate scaling by $\frac{D}{|\mathcal{B}_b|}$ to save computational time. 
% Additional speed-ups can be achieved by using a suitable sequence of CAVI updates. 

\begin{comment}
\begin{enumerate}
    \item First, suggest full-stochastic gradient approach (noisy gradients, MC reparametrization trick, Adam \ldots all as in \citep{Vafa_etal_2020}). 
    \item Objection about too complicated to track gradients for all the parameters.
    \item Hierarchical priors would allow for full-condition distributions to create Gibbs sampling. 
    \item Similarly, we can help ourselves with coordinate ascent.
    \item Hence, CAVI updates where it can be performed, see Appendix~\ref{sec:implementation-details} for details.
    \item \textcolor{red}{Do a research on combining stochastic gradient approach with CAVI updates.}
    \item And stochastic gradients for parameters that cannot be directly CAVI updated, see Appendix~\ref{sec:implementation-details} for details.
    \item Correcting \citep{Vafa_etal_2020} - expectation of the ideological term here or just in the Appendix~\ref{sec:implementation-details}?
\end{enumerate}
\end{comment}

%%%!!!!!!!!!!!!!!!!!!!!!!!!!!!!!!!!!!!!!!!!!!!!!!!!!!!!!!!!!!!!!!!!!!!!!!!%%%
%%%%%%%%%%%%%%%%%%%%%%%%%%%%%%%%%%%%%%%%%%%%%%%%%%%%%%%%%%%%%%%%%%%%%%%%%%%%%
%%%%%%%%%%%%               NEW SUBSECTION                   %%%%%%%%%%%%%%%%%
%%%%%%%%%%%%%%%%%%%%%%%%%%%%%%%%%%%%%%%%%%%%%%%%%%%%%%%%%%%%%%%%%%%%%%%%%%%%%
%%%!!!!!!!!!!!!!!!!!!!!!!!!!!!!!!!!!!!!!!!!!!!!!!!!!!!!!!!!!!!!!!!!!!!!!!!%%%
\subsection{Post-processing and final inference} \label{subsec:inference}
After model fitting we summarize the results by determining point estimates for the parameters based on posterior means. These point estimates are used to inspect the topic-specific polarity, assess the regression results to characterize the ideological positions, and obtain aggregate author-specific ideological positions.  Prevalent terms for each topic are identified using a plug-in estimator based on these point estimates to obtain corrected intensities on a log scale. The inspection of regression results makes use of highest posterior density (HPD) intervals to determine closeness to a null model and interpretation is eased using a convenient visualization of the results.

\subsubsection*{Posterior estimates}
Having confirmed convergence of the ELBO values after a sufficient number of epochs $E$, the last values obtained for the variational parameters represent the final estimates $\widetilde{\bm \phi} = \bm \phi^{\mathit{BE}}$. These variational parameters describe the estimated variational families that are closest to the posterior distribution in terms of Kullback-Leibler divergence. We use these estimates to determine point estimates for the parameters of the posterior distribution of interest. In particular we determine the posterior mean estimates of $\bm{\zeta}$ based on the means induced by the variational distributions: $\widetilde{\phi}_{\bullet}^\loc$ for the normal variational family, $\widetilde{\phi}_{\bullet}^\shp / \widetilde{\phi}_{\bullet}^\rte$ for the gamma variational family. 

HPD intervals for the parameters are derived based on the quantiles of the normal and gamma variational densities. For the regression coefficients $\bm \iota_k$, the posterior is approximated by a multivariate normal distribution given by
$
\bm \iota_k | \mathbb{Y}, \mathbb{X} 
% https://stats.stackexchange.com/questions/183046/what-exactly-does-dot-sim-notation-mean
\mathrel{\dot\sim}
\knorm{L}{\widetilde{\bm \phi}_{\iota k}^\loc}{\widetilde{\bm \phi}_{\iota k}^\covm}
$ for each topic~$k$. Moreover, the posterior distribution of any linear combination of regression coefficients induced by a matrix~$\mathbb{C} \in \R^{C\times L}$ is approximated by 
$
\left.
    \mathbb{C} \left( \bm \iota_k - \widetilde{\bm \phi}_{\iota k}^\loc\right) 
\right| \mathbb{Y}, \mathbb{X} 
\mathrel{\dot\sim}
\knorm{C}{\bm 0}{\mathbb{C} \widetilde{\bm \phi}_{\iota k}^\covm \; \T{\mathbb{C}}}
$. Hence, HPD regions are obtained via 
$$
\left.
    \T{\left(\bm \iota_k - \widetilde{\bm \phi}_{\iota k}^\loc\right)} \T{\mathbb{C}}
    \left[\mathbb{C}\; \widetilde{\bm \phi}_{\iota k}^\covm \; \T{\mathbb{C}}\right]^{-1} 
    \mathbb{C} \left(\bm \iota_k - \widetilde{\bm \phi}_{\iota k}^\loc\right) 
\right| \mathbb{Y}, \mathbb{X} 
\mathrel{\dot\sim}
\chi_{C}^2
$$
provided the inverse matrix exists. 

To evaluate the closeness of the approximated posterior mean to $\bm 0$, we construct a HPD region that touches $\bm 0$ (i.e., $\bm 0$ lies on the edge of the region) and compute the complementary coverage probability (CCP), i.e., one minus the probability covered by this HPD region. %Large values of CCP correspond to $\bm 0$ being close to the center of the posterior probability mass. While low values close to 0 suggest effects significantly distant from zero (no effect in general). 
When analyzing the model results later, we use the following labels to indicate small values of the CCP: $***$, $**$, $*$, $\cdot$ for values below 0.001, 0.01, 0.05, 0.1, respectively. In particular, we determine CCP values for single regression coefficients as well as for the set of regression coefficients corresponding to the same categorical variable and use them to infer the relevance of these coefficients and variable to characterize the ideological positions.

Author-specific ideological positions $\ideal_{ak}$ are estimated by $\phi_{ak}^\loc$ for each topic~$k$ separately. In order to compare aggregate author-specific ideological positions to the ideal points obtained for the TBIP model, we take a weighted average across the ideological positions over all topics. To respect the prevalence of each topic, we assign to the ideological positions of highly discussed topics a higher weight. In particular, the weights are constructed from the variational means of the topic intensities $\theta_{dk}$ averaged for documents written by the same author in the following way:
$
w_{ak} = \frac{1}{|\mathcal{D}_a|} \sum \limits_{d \in \mathcal{D}_a} \widetilde{\phi}_{\theta dk}^\shp \,/\, \widetilde{\phi}_{\theta dk}^\rte.
$

We assess which terms are highly prevalent when discussing a certain topic~$k$ with different ideological position $\mathsf{i} \in \{-1, 0, 1\}$ by determining a plug-in estimate based on the posterior mean estimates for $\bm \beta$ and $\bm \eta$ and obtaining ideology-corrected intensities on a log scale using:
$$
\Eq \log \left( \beta_{kv} \exp \left\{\eta_{kv} \mathsf{i} \right\} \right)
=
\Eq \log \left( \beta_{kv} \right) + \mathsf{i} \Eq \eta_{kv}
=
\psi\left(\widetilde{\phi}_{\beta kv}^\shp\right) - \log \left(\widetilde{\phi}_{\beta kv}^\rte\right) + \mathsf{i} \cdot \widetilde{\phi}_{\eta kv}^\loc,
$$
where $\psi(\cdot)$ is the digamma function. Using the log scale implies that the result could be negative. To avoid negative values, we shift the result by subtracting the minimum and adding 5\% of the range. We order the terms according to these ideology-corrected intensities and construct word clouds of the 10 most frequently used terms for each topic~$k$.

\subsubsection*{Visualizing regression results}

To summarize the regression results, we create a \emph{regression summary plot}. This plot provides an overview on the estimated ideological positions, potentially grouped by a covariate, using histograms as well as the estimated regression coefficients to assess the impact of the covariates on ideological positions. The plot differs depending on if the model includes only main effects or also interaction effects between one main variable and the remaining variables. This plot assumes that all regressors are categorical.

If the regression model only includes main effects, the plot shows a histogram of the estimated ideological positions on the left, potentially grouped and colored indicating the group-specific average (see Figure~\ref{fig:TBIPhier_ideal_a_Nreg_all_no_int_covariate_effects}). The estimated regression coefficients are then column-wise shown for each of the covariates included in the regression. For each category of the covariate, a rectangle is included where the height is proportional to the number of observations. Next to each rectangle, a text label is inserted indicating the name of the category and the number of observations. The area of the rectangles is colored by the value of the regression coefficient (with a color legend on the right) and stars are added corresponding to the CCP values. The CCP value obtained when the regression coefficients of all categories of the covariate are set to zero is given at the bottom for each covariate. 

If the regression model also includes an interaction effect between one main covariate and all other covariates, the regression summary plot is changed by including row-wise the separate results for the categories of the main covariate (see Figure~\ref{fig:TBIPhier_all_party_effects_interactions_transposed_k_9}). The effects of the main covariate (when the other covariates are at their baseline levels) are visualized in the first column next to the histogram. The rectangles are included for each of the categories of the main covariate with height corresponding to the number of observations. The area of the rectangles is colored according to the values of the regression coefficients and stars corresponding to the CCP values are added inside the rectangles. At the bottom the CCP value is included which assesses if these effects differ from the baseline category. 

Next, the interaction effects of the main covariate with the other covariates are column-wise included. The color of the rectangle area indicates the effect the category of this covariate has in comparison to the baseline level conditional on the main covariate having the row-specific value. The stars inside the rectangles are based on the corresponding CCP values. At the bottom, the CCP value is shown which is obtained when all interaction effects are dropped, i.e., when the coefficients are restricted to correspond only to a main effect model for this covariate. 

%\begin{itemize}
 %   \item[$\pm$] Add the description of finding the most influential speeches.
%\end{itemize}

%%%!!!!!!!!!!!!!!!!!!!!!!!!!!!!!!!!!!!!!!!!!!!!!!!!!!!!!!!!!!!!!!!!!!!!!!!%%%
%%%%%%%%%%%%%%%%%%%%%%%%%%%%%%%%%%%%%%%%%%%%%%%%%%%%%%%%%%%%%%%%%%%%%%%%%%%%%
%%%%%%%%%%%%               NEW SUBSECTION                   %%%%%%%%%%%%%%%%%
%%%%%%%%%%%%%%%%%%%%%%%%%%%%%%%%%%%%%%%%%%%%%%%%%%%%%%%%%%%%%%%%%%%%%%%%%%%%%
%%%!!!!!!!!!!!!!!!!!!!!!!!!!!!!!!!!!!!!!!!!!!!!!!!!!!!!!!!!!!!!!!!!!!!!!!!%%%
\subsection{Implementation} \label{subsec:implementation}

Model fitting is implemented in TensorFlow 2.0 %\citep{tensorflow2015} 
which is designed for gradient based optimization of unrestricted variables as well as TensorFlow's add-on library for probabilistic reasoning, TensorFlow Probability. Conditions such as positivity (or even the structure of a~covariance matrix) are incorporated by suitable (chains of) \texttt{tfp.bijectors}, e.g., the \emph{softplus} function $\log(1+\exp\{x\})$. Automatic splitting of the workload into batches is straightforward within this environment, but also required because objects of size $D \times K \times V$ easily exceed the memory limits unlike objects of size $|\mathcal{B}| \times K \times V$. Function \texttt{GradientTape} is able to track the required gradients during $\ELBO{\bm\phi}$ approximation including the implicit reparametrization trick. The gradients are combined with one of the implemented \texttt{tf.optimizers}, i.e., Adam, to perform the gradient ascent update. The CAVI updates are manually added and inserted before the $\ELBO{\bm\phi}$ approximation.
The model manager allows to quickly save an intermediate model configuration including current parameter values. This intermediate model can then be promptly loaded and estimation continued. 

For the empirical evaluation of the STBS model, we use in particular the following choices for the hyperparameters of the hierarchical prior structure:
\begin{align*}
        a_\theta &= 0.3,&a'_\theta&= 0.3,&b'_\theta&=0.3,&
        a_\beta &= 0.3,&a'_\beta&= 0.3,&b'_\beta&=0.3,\\
        &&a_{\rho}&=0.3,&a'_{\rho}&=0.3,&\kappa^2_{\rho}&=10,\\
        &&a_{\omega}&=0.3,&a'_{\omega}&=0.3,&\kappa^2_{\omega}&=10,&
        a_I &=0.3,&b_I&=0.3.
\end{align*}
Additional values specified when running the algorithm are:
$E = 1,000$, $|\mathcal{B}|=512$, $I = 1$, $\kappa =0.51$, $\tau = 0$, $\alpha = 0.01$. Convergence of the algorithm is visually assessed by inspecting the trace plot of the ELBO values versus epochs. $1,000$ epochs are clearly sufficient for convergence.
Our flexible implementation that allows for several different prior structure choices is available at Github \citep{Github_STBS}.

\section{The unfolding of ideology in political discourse} \label{sec:senate}
 
In the following, we use our model to gain insights into the influence of ideology on the political discourse through speeches delivered in the 114th U.S.\ Senate (2015--2017). We use the data set provided by \citet{Gentzkow+Shapiro+Taddy:2018} and previously analyzed in, e.g., \cite{Taddy_etal_2019, RSS_2023, Hofmarcher+Adhikari+Gruen:2022, Vafa_etal_2020}. We fit two different variants of the STBS model: (1) with ideological positions fixed over topics and (2) with topic-specific ideological positions. We determine and compare the induced polarity for each topic for both models. For the model with fixed ideological positions, we highlight the additional insights provided by the inclusion of speaker-specific covariates such as gender, religion, or experience to characterize the ideological positions.  Inspecting the model with topic-specific ideological positions allows us to indicate how this additional flexibility changes results and enables a more fine-grained analysis.

In the first STBS model variant, we assume that the ideological positions
$\ideal_{a}$ are identical across topics for each speaker and characterize them using author-specific covariates. We use an additive regression setup for the ideological positions, incorporating the following categorical covariates: party (baseline category: Democrat), gender (Male), region (Northeast), generation (Silent), experience (10+ years in Congress), and religion (Other), i.e., the regression formula for the ideological positions is given by:
\begin{equation}
\label{eq:example1}
\texttt{$\sim$ party + gender + region + generation + experience + religion}
\end{equation}
Subsequently, in the second variant, we allow for topic-specific ideological positions $\ideal_{ak}$. Additionally, we introduce interaction effects between party and the remaining covariates, enhancing our model's granularity in understanding both ideological positions and the impact of covariates at the party level. This extended regression model is represented by:
\begin{equation}
\label{eq:example2}
\texttt{$\sim$ party * (gender + region + generation + experience + religion)}
\end{equation}
In the following, we describe the data set and the data pre-processing in Section~\ref{sec:senate:data}. We then compare the polarity in language on topic level for both model variants in Section~\ref{subsec:res_pol}. Sections~\ref{subsec:reg_ideal_a} and \ref{subsec:detailed_regression} present the detailed regression results to characterize the influence of the covariates on ideology for both models.

\subsection{Speech data, covariates and pre-processing} \label{sec:senate:data}
The Congress Record provided by \citet{Gentzkow+Shapiro+Taddy:2018} includes speech data for the 114th Congress session in the Senate. In addition, this data source also contains information on gender (``Gender'') and for which state (``Region'') Congress members were elected. We add information on the age (``Generation'') and  years of service for each Congress member \citep[``Experience''; obtained via][]{Skelley2023, Github_538}, but also religious affiliation (``Religion''; \citealt{PewResearchCenter}). The categories defined for these variables can be found in Appendix~\ref{app:overview}. Summary statistics of the demographics of the Senators used in the analysis are provided by Figure~\ref{fig:TBIPhier_ideal_a_Nreg_all_no_int_covariate_effects} in Section~\ref{subsec:reg_ideal_a}, which visualizes regression results. 

Pre-processing the speech data to obtain a document-term matrix follows essentially \cite{Gentzkow_etal_2019} and \citet{Vafa_etal_2020}. In particular, we remove Senators with less than 24 speeches resulting in 99 Senators. 
Following \cite{Gentzkow_etal_2019}, we use bigrams for our analysis. We use the same list of stop words as \citet{Vafa_etal_2020} and  only keep the bigrams that appear in at least 0.1\% and at most 30\% of the documents.\footnote{In particular, the pre-processing procedure utilizes the \texttt{CountVectorizer} function from the \textbf{scikit-learn} package.}% \citep{scikit-learn}.}
Finally, we restrict the analysis to bigrams used by 10 or more Senators to eliminate terms used only by a~small number of Senators. The pre-processing results in $D=14\,672$ non-empty speeches containing $V=5\,031$ terms. The choice of the number of topics follows \citet{Hofmarcher+Adhikari+Gruen:2022} who add 3 topics to the 22 topics manually predefined in 
\citet{Gentzkow_etal_2019} to capture the content in Congress speeches in a suitable way. We initialize the variational parameters $\phi_\theta$ and $\phi_\beta$ with the results obtained for the hierarchical Poisson factorization model with $K=25$. This essentially preserves the meaning of each of the topics $k \in \{1, \ldots, K\}$ across the two fitted model variants. 

\subsection{Topic polarity}
\label{subsec:res_pol}
The STBS model enables the detection of polarity in speech at topic level. Polarity is quantified as the variability of the product of ideological positions and polarity values, denoted by $\eta_{kv}\ideal_{ak}$ (see Equation~\eqref{eq:Pois_counts}). Figure~\ref{fig:eta_ideal_variability_a_vs_ak} shows the polarity estimates with the green bars representing the results where ideological positions are fixed across topics and the orange bars depicting the results where ideological positions are allowed to vary across topics. 
To ease interpretation of results, Figure~\ref{fig:eta_ideal_variability_a_vs_ak} also includes labels for all topics which we obtained by inspecting the most frequent terms associated with each topic and reviewing the speeches assigned to the topic.
Figure~\ref{fig:eta_ideal_variability_a_vs_ak} shows the following: Firstly, the more flexible model, employing topic-specific ideological positions, yields higher polarity estimates across all topics. 
Secondly, polarity varies among topics regardless of if the ideological positions are fixed or allowed to vary across topics. Some congruence between the extent of polarity is discernible for the two models. For example, topic~24 (Export, Import and Business) clearly has the highest polarity score for the model with topic-specific ideological positions but is also among the topics with highest polarity scores for the fixed positions model. Topic~4 (Commemoration and Anniversaries) has the lowest polarity for the model with topic-specific ideological positions and also has a rather low polarity score for the fixed positions model. By contrast, there are also topics where the extent of polarity between the two models varies considerable. The highest difference in the extent of polarity between the two model specifications is observed for topic~9 (Veterans and Health Care), where for the model with fixed ideological position hardly any polarity in speech is inferred while the polarity score is among the higher ones for the more flexible model. 
These results imply that overall a similar picture emerges for both model specifications. However, the more flexible specification captures more variability due to polarity.

%However, we aim not only to detect polarity but also to understand its underlying reasons and the extent to which it can be explained by covariates serving as explanatory factors in our regression model on ideological positions.

% Secondly, for. e.g. Topic~9 we mentioned above that the polarity is not driven by party assignment. The fixed ideological position model is not able to capture this, resulting in a very low polarity score. However the more flexible model allows to detect this.  
\begin{figure}[t!]
    \centering
     \includegraphics[width=\textwidth]{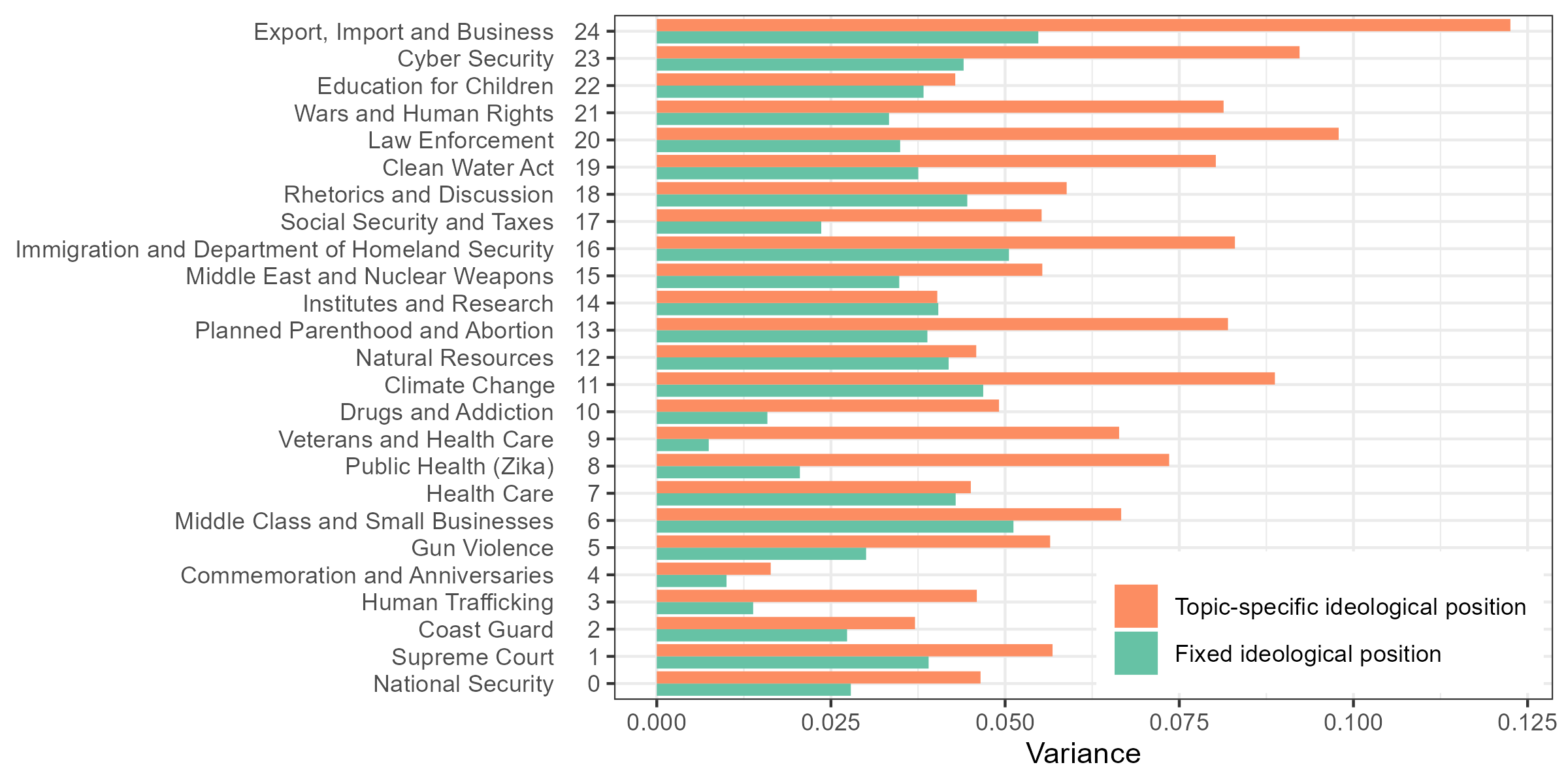}
    \caption{Comparison of polarity induced by the variance of $\phi_{\eta kv}^\loc \cdot \phi_{\ideal ak}^\loc$ under fixed and under topic-specific ideological positions.}
    \label{fig:eta_ideal_variability_a_vs_ak}
\end{figure}

\subsection[Fixed ideological positions]{Fixed ideological positions $\ideal_a$} \label{subsec:reg_ideal_a}
% In this section, we present the results of a simple specification of our proposed STBS model where the ideological positions $\ideal_a$ are assumed to be identical over topics for each author but may be driven through author specific covariates.
% In particular, we use an additive regression set up including the following categorical covariates (baseline category): party (Democrats), gender (male), region (Northeast), generation (Silent), experience (10+ years in Congress) and religion (Other), i.e.
% \begin{equation}
% \label{eq:example1}
% \texttt{ideal $\sim$ party + gender + region + generation + experience +  religion}
% \end{equation}
% This specification represents a direct extension of the TBIP model proposed in \citet{Vafa_etal_2020} by including the structural information of the text originator covariates.  
% In  what follows, we focus on the results of the scaling positions $\ideal_a$, topic word distributions will be a presented in the more general model setting of STBS. 

\begin{figure}[t!]
    \centering
    \includegraphics[width=\textwidth]{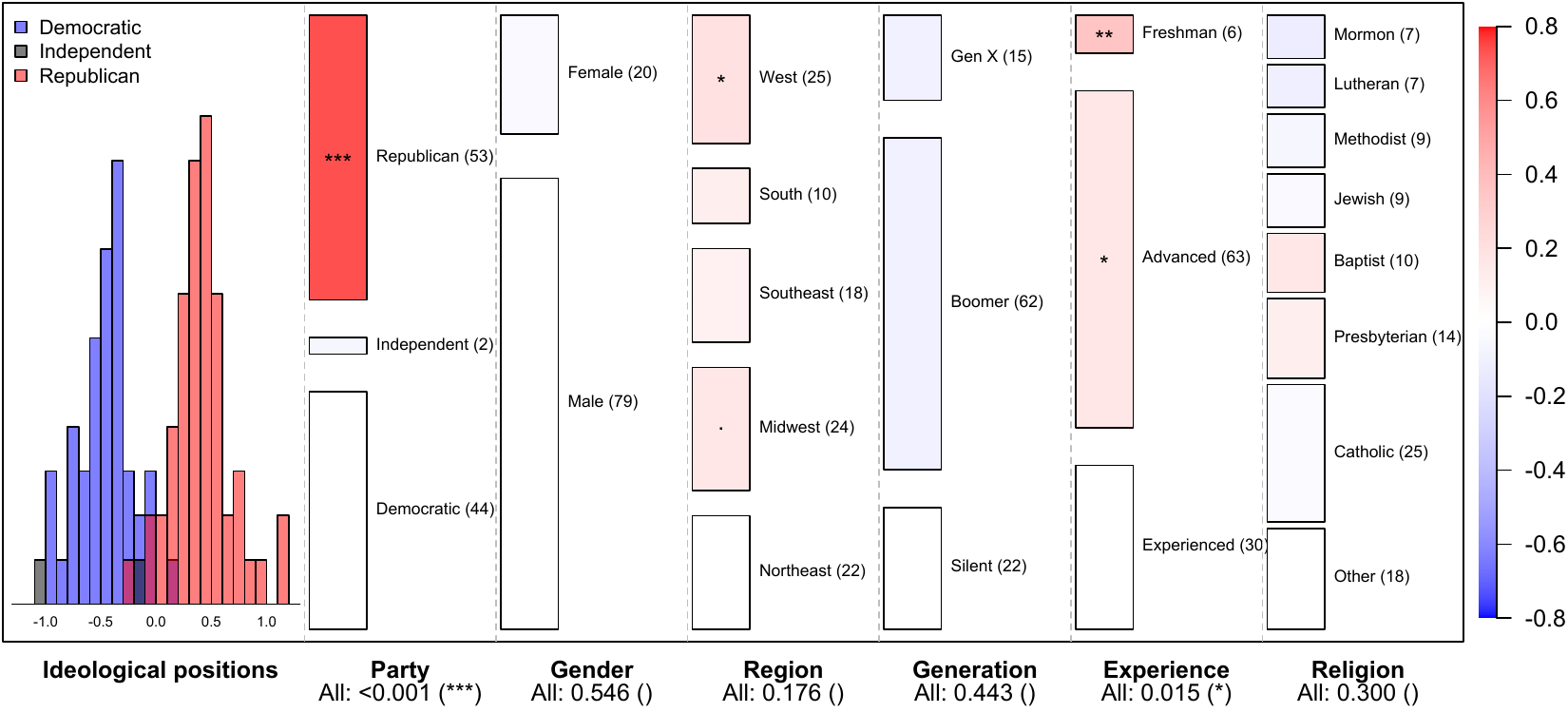}
    \caption{Regression~\eqref{eq:example1} summary plot of the model with fixed ideological positions across topics. CCP codes: $*** < 0.001$, $** < 0.01$, $* < 0.05$, $\cdot < 0.1$.}
    \label{fig:TBIPhier_ideal_a_Nreg_all_no_int_covariate_effects}
\end{figure}

The regression results of Equation~\eqref{eq:example1} are summarized in Figure~\ref{fig:TBIPhier_ideal_a_Nreg_all_no_int_covariate_effects}. The histogram on the left is grouped by party membership and clearly shows the strong dependence of the ideological positions on party membership. We obtain a bimodal distribution for the ideological positions $\ideal_a$ with in general negative values for the Democrats and positive values for the Republicans. The influence of the party affiliation is further supported by the low CCP value given at the bottom of the first column labeled with ``Party'' (CCP $<0.001$).

The other CCP values at the bottom of Figure~\ref{fig:TBIPhier_ideal_a_Nreg_all_no_int_covariate_effects} indicate that there is posterior evidence that years of service for the Congress influences the ideological position (CCP for ``Experience'' $= 0.015$).
The colors in the rectangles imply that the ideological position shifts to the right with a decrease in experience.
Keeping in mind that all Senators categorized as ``Freshman'' are Republican, this implies that in particular less experienced Republicans lean more towards using extreme language than their party colleagues. One possible reason, albeit speculative, is that less experienced (Republican) Senators may seek to gain more attention and solidify their position within their party by adopting more extreme stances, thereby asserting their political identity.

For this model, all other CCP values included at the bottom are above 0.05 and indicate a lack of posterior evidence that any other covariate influences consistently the ideological position of a speaker.

\subsection[Topic-specific ideological positions]{Topic-specific ideological positions $\ideal_{ak}$} \label{subsec:detailed_regression}

In the following, we consider in detail the results for the STBS model with topic-specific ideological positions $\ideal_{ak}$ using Equation~\eqref{eq:example2} in the regression setup. We first focus on the distribution of ideological positions grouped by party for each of the topics and averaged across topics. The average values are compared to those obtained with the TBIP model imposing fixed ideological positions. 

We inspect the topical content based on word clouds reflecting the term frequency assuming a neutral ideological position as well as extreme ideological positions with values $1$ and $-1$ for selected topics (see Figure~\ref{fig:ideal_points_hier_vs_TBIP_party_by_weighted_average_K25} on the right).
Inspecting these word clouds allows to infer the content of a topic but also to assess how polarization changes word use in a debate on a specific topic. These word clouds are complemented by those in Figure~\ref{fig:TBIPhier_all_wordclouds_justeta_logscale_selected_topics} in Appendix~\ref{app:wordclouds} where for negative and positive polarity focus is given to those terms where the increase is strongest according to the polarity values. To infer the topical content as well as how ideology changes word use, we also examined speeches with particularly high $\theta$ values for a specific topic to infer the topical content in combination with the topic-specific ideological position inferred for the Senator giving the speech. More details on the most relevant speeches identified in this way for a selected set of topics are given in Appendix~\ref{app:speeches}.

We focus in particular on the following topics. We inspect results for topic~4 (Commemoration and Anniversaries) and topic~24 (Export, Import and Business) which had the lowest and highest values for polarity. In addition we consider topic~9 (Veterans and Health Care), where the difference in polarity between the two model specifications was strongest. We also inspect topic~16 (Immigration and Department of Homeland Security) where the difference between the party averages for the topic-specific ideological positions is largest. Also topic~11 (Climate Change) and topic~13 (Planned Parenthood and Abortion) are inspected in detail. These topics have rather high polarity values as well as differences between party averages and the regression analysis indicates interesting interaction effects.  

\begin{sidewaysfigure}
  \centering  
\begin{minipage}{0.495\textwidth}
 \includegraphics[width=0.99\textwidth]{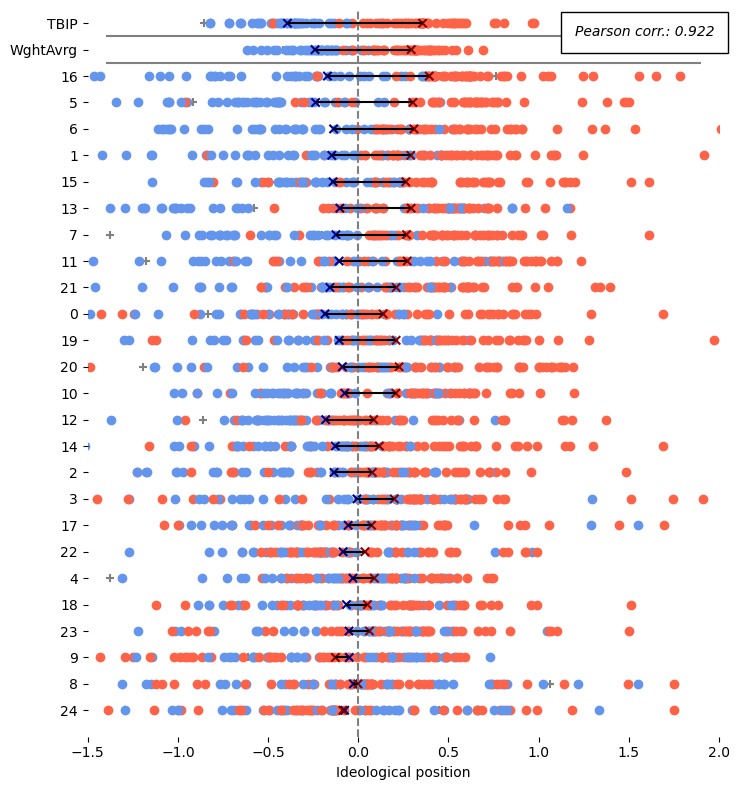}
    
\end{minipage}
\begin{minipage}{0.495\textwidth}
    \includegraphics[width=0.99\textwidth]{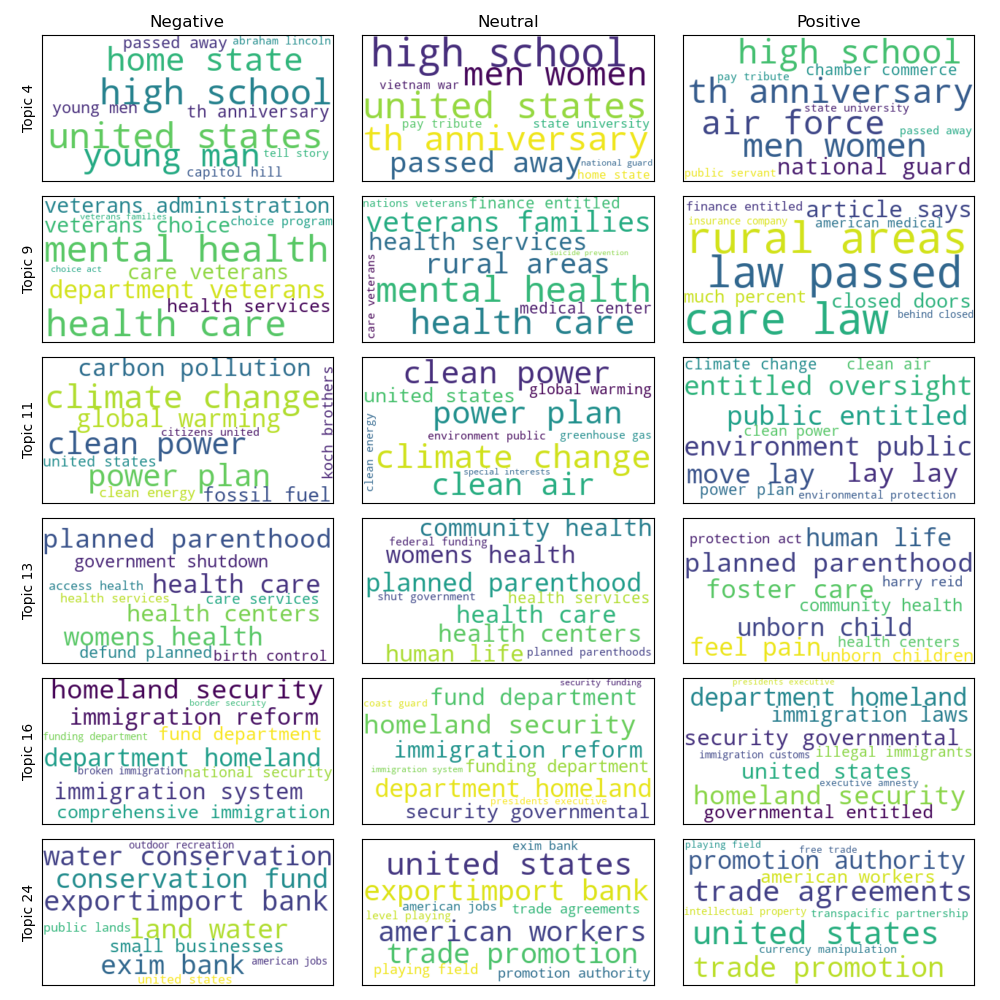}
    %\label{fig:TBIPhier_all_wordclouds_logscale_selected_topics}
\end{minipage}

\caption{Left: topic-specific ideological positions $\ideal_{ak}$; 
    Democrats (blue), Republicans (red), Independent (grey). 
    %Democrats (lightblue) are represented by Charles Schumer (blue) and Republicans (orange) are represented by Mitch McConnell (red). 
    Black lines connect the weighted averages of ideological positions for Democrats and Republicans and topics are ordered by the distance between the weighted averages.
    Right: word clouds for selected topics. Columns labeled ``Positive'' and ``Negative'' include results for the intensities corrected for ideology using an ideological position equal to $1$ or $-1$, respectively.}
\label{fig:ideal_points_hier_vs_TBIP_party_by_weighted_average_K25}
\end{sidewaysfigure}

Figure~\ref{fig:ideal_points_hier_vs_TBIP_party_by_weighted_average_K25} on the left presents the estimated ideological positions at the topic level for all $K=25$ topics
with Democratic Senators in blue and Republican Senators in red. The topics are ordered such that the distance between the average ideological positions of Democrats and Republicans is decreasing. These averages are indicated by crosses and linked by a black line. Topics at the top are the most polarizing ones between the parties. 

In addition the ideological positions obtained for the TBIP model are included as well as the weighted average ideological positions across topics. The Pearson's correlation coefficient between these weighted average ideological positions and those obtained for TBIP exceeds 0.9, thus signifying a strong congruence and suggesting that the overall ideological stance is maintained by the STBS model while enabling a selective diversification of ideological positions across specific topics.

\begin{figure}[t!]
    \centering
    \includegraphics[width=0.99\textwidth]{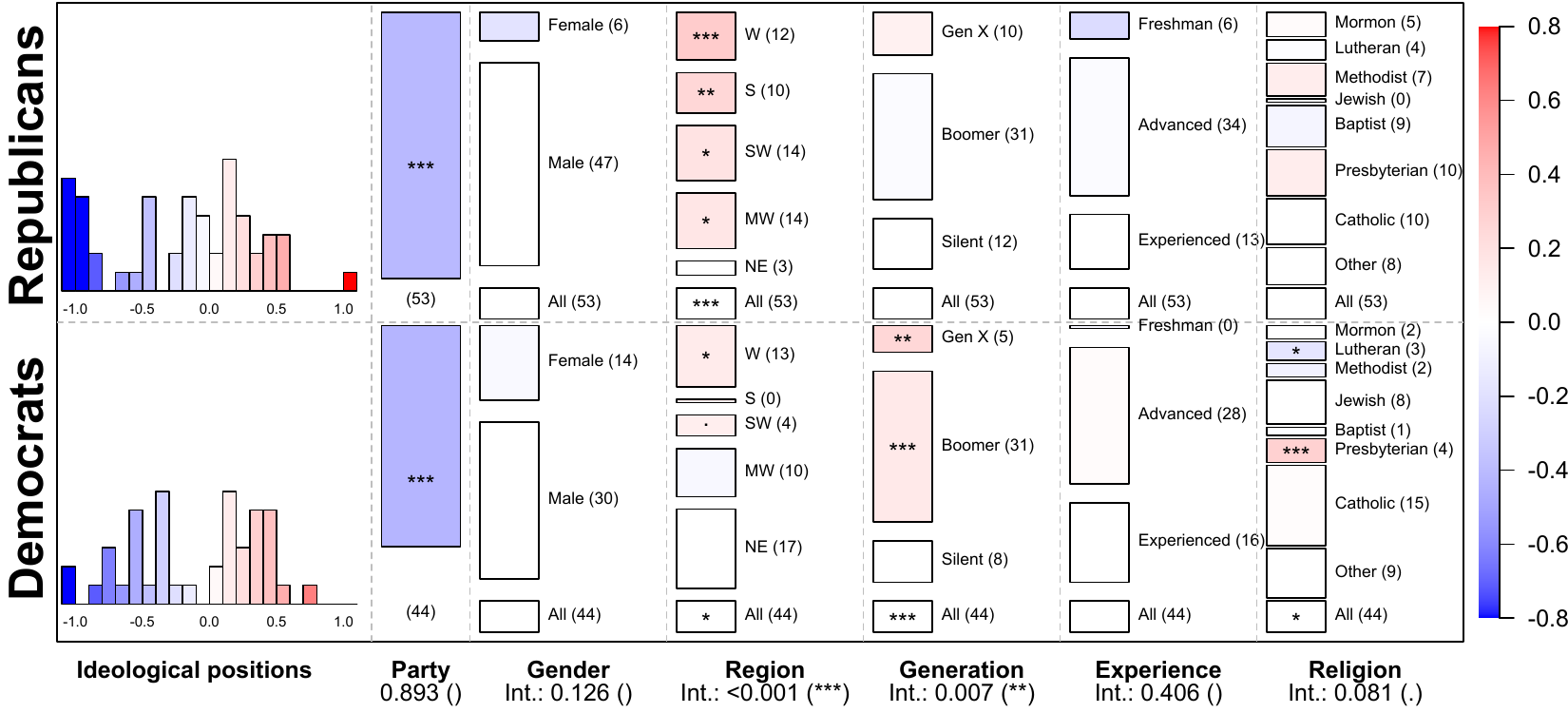}
    \caption{Topic~9 (Veterans and Health Care). Regression~\eqref{eq:example2} summary plot for the model with topic-specific ideological positions.
    CCP codes: $*** < 0.001$, $** < 0.01$, $* < 0.05$, $\cdot < 0.1$.}
    \label{fig:TBIPhier_all_party_effects_interactions_transposed_k_9}
\end{figure}

Figure~\ref{fig:ideal_points_hier_vs_TBIP_party_by_weighted_average_K25} (left) indicates that topic~9 (Veterans and Health Care) is the only one topic where the order of the party means is switched. However, their difference is only very small. This suggests that the polarity of this topic is not determined by party assignment. Analyzing the regression results visualized in Figure~\ref{fig:TBIPhier_all_party_effects_interactions_transposed_k_9} provides insights into the drivers of the ideological positions of this topic.  Given the speaker is a Republican and all other covariates take the values of the baseline categories (i.e., Male, NE, Silent, Experienced, Other), the ideological position is shifted to the left (rectangle at the top left). The same holds for Democrats (rectangle at the bottom left). For both parties there is strong posterior evidence that region influences the scaling position with more positive ideological positions for Senators from the West, Southwest and South. Inspecting the corresponding word clouds and speeches reveals that speakers from those regions tend to talk more about the medical and health care system, and not about veterans. One rather speculative reason for this could be that the health care system in general has a strong geographic variation, while the veterans' health care system is centralized \citep[see][]{Ashton}. % \textcolor{red}{speeches of topic 9?}

% This geographic variation is also observed for topic~5 (on gun control), which is the most polarizing topic in ideological positions. Here we observe strong posterior evidence of party having an influence, with a coefficient of $0.784$ for Republican. For the interaction term between Republican and region, we observe the strongest effect for \emph{Midwest} with a coefficient of $0.249$ followed by \emph{West} ($0.227$) and \emph{Southeast} ($0.101$) (see the results in Appendix~\ref{sec:emp-analysis-details}).

%We also mentioned that the highest polarization in language is observed for topic~24. Following the word clouds this is a topic about \emph{}
\begin{figure}[t!]
    \centering
    \includegraphics[width=0.99\textwidth]{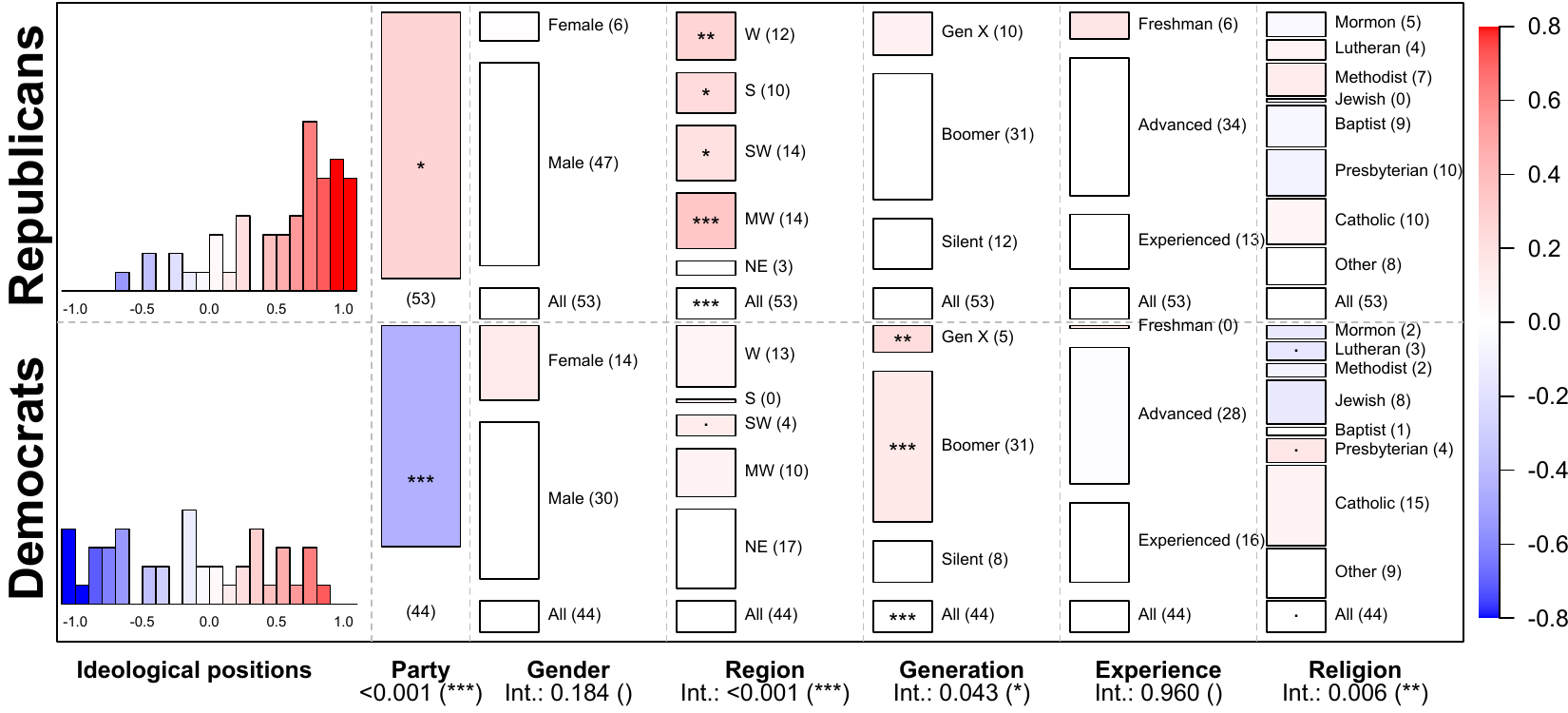}
    \caption{Topic~11 (Climate Change). Regression~\eqref{eq:example2} summary plot for the model with topic-specific ideological positions.
    CCP codes: $*** < 0.001$, $** < 0.01$, $* < 0.05$, $\cdot < 0.1$.}
    \label{fig:TBIPhier_all_party_effects_interactions_transposed_k_11}
\end{figure}

Figure~\ref{fig:TBIPhier_all_party_effects_interactions_transposed_k_11} summarizes the regression results for topic~11 (Climate Change). For this topic, we do not only observe ideological discrepancies between the parties but also between the generations, in particular for the Democrats (CCP value   $<0.001$ in Table~\ref{tab:regression_coefs_k_4_9_11_13_16_24} in Appendix~\ref{app:coefs}). Ideological positions of the younger generation tend to be more positive for Democrats. Positive extremists would rather talk about entitled oversight while negative extremists frequently use \emph{fossil fuel} or \emph{carbon pollution}. 

\begin{figure}[t!]
    \centering
    \includegraphics[width=0.99\textwidth]{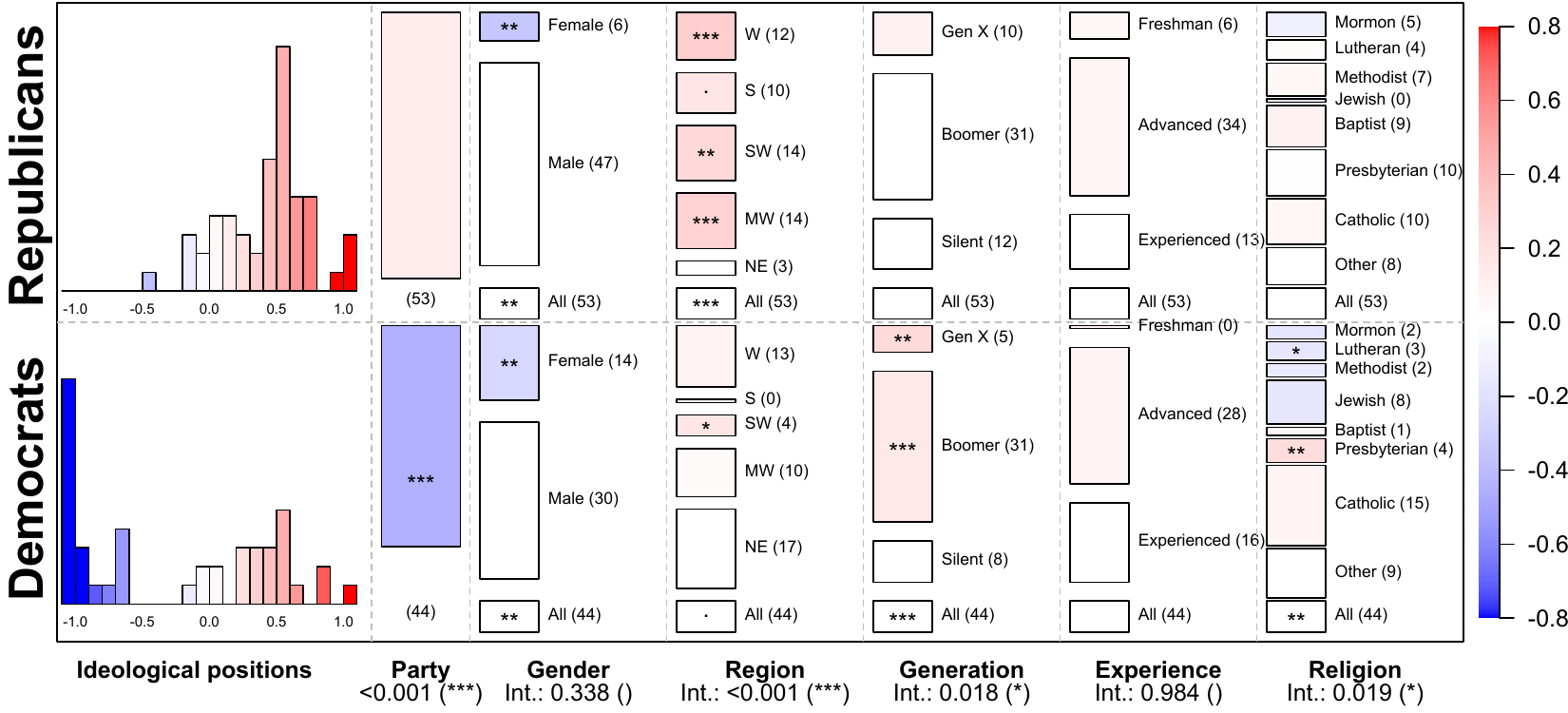}
    \caption{Topic~13 (Planned Parenthood and Abortion). Regression~\eqref{eq:example2} summary plot for the model with topic-specific ideological positions.
    CCP codes: $*** < 0.001$, $** < 0.01$, $* < 0.05$, $\cdot < 0.1$.}
    \label{fig:TBIPhier_all_party_effects_interactions_k_13}
\end{figure}

Topic~13 (Planned Parenthood and Abortion) illustrates a topic with a significant effect for gender, see Figure~\ref{fig:TBIPhier_all_party_effects_interactions_k_13}. Regardless of party membership, women tend to have a more negative ideological position than men. From the word clouds in Figure~\ref{fig:ideal_points_hier_vs_TBIP_party_by_weighted_average_K25}, we see that a speaker with negative ideological position prefers to use terms like \emph{women's health}, \emph{family planning} or \emph{care services}, while a speaker with positive ideological position would talk about \emph{unborn children} and \emph{foster care}.

According to Figure~\ref{fig:eta_ideal_variability_a_vs_ak} inspecting induced polarity, topic~24 (Export, Import and Business) is overall the most polarizing topic.  This topic is at the bottom of Figure~\ref{fig:ideal_points_hier_vs_TBIP_party_by_weighted_average_K25} thus indicating that this polarization is not driven by party membership. % Inspecting the word clouds and documents assigned to that topic identifies topic~24 as an economic topic focusing on trade and American workers where the included bigrams are \emph{import export, trade agreements, american workers}.
Inspecting the regression results (see 
Table~\ref{tab:regression_coefs_k_4_9_11_13_16_24} in Appendix~\ref{app:coefs}) reveals that indeed generation in combination with party as well as region and religion for Republicans is associated with the ideological position.

\section{Conclusion} \label{sec:conclusion}

Previous approaches to infer ideological positions from texts were limited in several aspects. The STBS model relaxes the assumption of fixed ideological positions across topics and allows ideological positions to vary across topics of discussion. 
Hierarchical shrinkage priors complement the generative model and increase flexibility as well as guide inference inducing sparsity.
A regression model that identifies the associations between the ideological positions and covariates characterizing the authors is included in the model and no additional post-hoc analysis is required to infer how covariates are associated with the ideological positions but their effects are directly captured by model parameters.  To estimate STBS, we combined the pure stochastic gradient approach pursued in \citet{Vafa_etal_2020} to find the optimal variational parameters with CAVI updates to improve the convergence behavior and save computational resources by reducing the number of gradients to track. These updates are available due to the conjugate choices for the hierarchical prior distributions and the variational families. 

The STBS model provides enhanced flexibility to the TBIP model by allowing for topic-specific ideological positions as well as the inclusion of covariates to characterize these positions. Nevertheless, the STBS model still relies on a number of assumptions regarding the ideological positions which might be perceived as rather restrictive in certain application cases. For example, the STBS model assumes that ideological positions remain constant for an author across all documents given topic. This is a reasonable assumption for a corpus where documents are collected in a rather short time frame. For corpora spanning a long time frame a time-varying version might be preferable. 
% E.g., if several sessions of the U.S.\ Senate are analyzed, one might want allow for session-to-session changes in ideological positions. Such an approach was already considered by \citet{Hofmarcher+Adhikari+Gruen:2022} for the TBIP model. In principle, our STBS model could be fitted in a similar way. Alternatively, the document-term matrices from different sessions could be combined into one large corpus and analyzed with the STBS model where a combination of author and session is used as author index. The regression on ideological positions could then be enriched by session-specific covariates. % Nevertheless, this solution is still restricted by assuming that ideological positions are fixed throughout the whole session and they are only allowed to change when a new session starts. We consider the assumption of changing the ideological position after a~very influential real-world event far more realistic. Detection of these events requires strict control over the prior distribution for document-specific ideological positions which is the next challenge we would like to address. 

Another restrictive aspect of the STBS model is that for a specific topic only two opposing ideologies are allowed to exist. In the world of bipolar politics as in the U.S., this  represents a reasonable assumption. However, in European countries the political situation is more complex and a one-dimensional concept of ideology would seem insufficient to capture all the nuances. Inspired by the Political Compass%\citep{wiki:The_Political_Compass}
, we could extend the STBS model and construct a text-based model that distinguishes the ideology on a given topic using both the social scale (Libertarian/Authoritarian) and the economic scale (Left/Right). % However, this requires a proper definition and implementation of the latent space so that the unique results are obtained despite the invariance to rotations.  

\begin{comment}
\begin{enumerate}
    \item Summarize our upgrades for TBIP model and how it improved the real data analysis.
    \item Suggest possible improvements:
    \begin{enumerate}
        \item Include also ethnicity of the author.
        \item Incorporate document-specific information on top of author-level information. This could model the evolution in time directly. But only document-specific parameters could arise from such regression model, e.g. $\theta_{dk}$, which was already used in STM \citep{Roberts_etal_JASA:2016}. Or somehow make the positions document-specific, e.g. a curve in time instead of a~fixed position. 
        \item Bi(Multi)variate ideological positions, there might not be only one direction to explore. Useful for European situation with many political parties. But problem with the fixation of the latent ideological space. 
        \item Other real datasets to analyze.
    \end{enumerate}
\end{enumerate}
\end{comment}

\if0\blind
{
\section*{Acknowledgments}
This research was supported by the Jubil\"aumsfonds of the Oesterreichische Nationalbank (OeNB, grant no.~18718).
}
\fi
\if1\blind
{

}
\fi

\begin{comment}
\clearpage % to be commented
\bigskip
\begin{center}
{\large\bf SUPPLEMENTARY MATERIAL}
\end{center}

\begin{description}

\item[Title:] Brief description. (file type)

\item[R-package for  MYNEW routine:] R-package ÒMYNEWÓ containing code to perform the diagnostic methods described in the article. The package also contains all datasets used as examples in the article. (GNU zipped tar file)

\item[HIV data set:] Data set used in the illustration of MYNEW method in Section~ 3.2. (.txt file)

\end{description}
\end{comment}

% \clearpage % to be commented

\bibliographystyle{agsm}
\bibliography{literature}

\newpage
\appendix
\section{Implementation details} \label{sec:implementation-details}

%%%!!!!!!!!!!!!!!!!!!!!!!!!!!!!!!!!!!!!!!!!!!!!!!!!!!!!!!!!!!!!!!!!!!!!!!!%%%
%%%%%%%%%%%%%%%%%%%%%%%%%%%%%%%%%%%%%%%%%%%%%%%%%%%%%%%%%%%%%%%%%%%%%%%%%%%%%
%%%%%%%%%%%%               NEW SUBSECTION                   %%%%%%%%%%%%%%%%%
%%%%%%%%%%%%%%%%%%%%%%%%%%%%%%%%%%%%%%%%%%%%%%%%%%%%%%%%%%%%%%%%%%%%%%%%%%%%%
%%%!!!!!!!!!!!!!!!!!!!!!!!!!!!!!!!!!!!!!!!!!!!!!!!!!!!!!!!!!!!!!!!!!!!!!!!%%%
\subsection[Expected ideological term]{Expected ideological term}\label{app:expectedterm}

The STBS model differs from the classical Poisson factorization model primarily because of the ideological term $\exp\{\eta_{kv} \ideal_{ak}\}$ that captures the different ideologies. This term includes the exponentiated product of two normally distributed variables which breaks conjugacy and complicates estimation. 
The CAVI updates require the expected ideological term $E_{dkv}$.  \citet{Vafa_etal_2020} use the following approximation:
$$
E_{dkv} := \Eq \exp\{\eta_{kv} \ideal_{a_d k}\} \approx \exp \{ \Eq \eta_{kv} \Eq \ideal_{a_d k} \} = \exp \{\phi_{\eta kv}^\loc \phi_{\ideal a_d k}^\loc \},
$$
claiming that the expectation is intractable. Thus, they replace $E_{dkv}$ with this geometric mean. 
However, the expectation can be obtained as follows.

\begin{alemma} \label{lemma:E_exp_XY}
    Let $X \sim \norm{\mu_x}{\sigma_x^2}$ and $Y \sim \norm{\mu_y}{\sigma_y^2}$ be independent. Then,
    \begin{equation}
        \E \exp \{X Y\} = \dfrac{1}{\sqrt{1-\sigma_{x}^2\sigma_{y}^2}} \exp 
        \left\{
            \dfrac{1}{2} \dfrac{\mu_{x}^2\sigma_{y}^2 + 2\mu_{x}\mu_{y} + \mu_{y}^2\sigma_{x}^2}{1-\sigma_{x}^2\sigma_{y}^2}
        \right\} 
        \quad \text{only if}\;
        \sigma_{x}^2\sigma_{y}^2 < 1,
    \end{equation}
    otherwise it is $+\infty$.
\end{alemma}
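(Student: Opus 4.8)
The plan is to reduce the two-dimensional integral to a single Gaussian integral by conditioning on one variable and invoking the moment generating function of a normal. Since $X$ and $Y$ are independent, the tower property gives $\E \exp\{XY\} = \E_Y\!\left[\E_X\!\left(\exp\{XY\} \mid Y\right)\right]$, and for a fixed value $Y = y$ the inner expectation is exactly the MGF of $X \sim \norm{\mu_x}{\sigma_x^2}$ evaluated at $t = y$, namely $\exp\{\mu_x y + \tfrac12 \sigma_x^2 y^2\}$. Thus the problem collapses to evaluating $\E_Y \exp\{\mu_x Y + \tfrac12 \sigma_x^2 Y^2\}$, a Gaussian expectation of the exponential of a quadratic.

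Next I would write this remaining expectation as $\int_{\R} \frac{1}{\sqrt{2\pi}\,\sigma_y}\exp\{\mu_x y + \tfrac12\sigma_x^2 y^2 - (y-\mu_y)^2/(2\sigma_y^2)\}\,dy$ and collect the exponent as a quadratic in $y$. The coefficient of $y^2$ is $\tfrac12\sigma_x^2 - \tfrac1{2\sigma_y^2} = (\sigma_x^2\sigma_y^2 - 1)/(2\sigma_y^2)$, and this is precisely where the dichotomy enters: the integrand is integrable if and only if this coefficient is strictly negative, i.e.\ $\sigma_x^2\sigma_y^2 < 1$. Otherwise the quadratic coefficient is nonnegative, the integrand fails to decay as $|y|\to\infty$, and the integral diverges; because the integrand is nonnegative, Tonelli's theorem justifies the interchange of expectations and legitimizes the conclusion that the value is genuinely $+\infty$. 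In the convergent regime I would complete the square in $y$, extracting the factor $\sqrt{\pi/c}$ from the standard Gaussian integral with $c = (1 - \sigma_x^2\sigma_y^2)/(2\sigma_y^2)$.

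The remaining work is bookkeeping: combining the normalizer $1/(\sqrt{2\pi}\,\sigma_y)$ with $\sqrt{\pi/c}$ yields the prefactor $1/\sqrt{1 - \sigma_x^2\sigma_y^2}$, while the constant and linear contributions after completing the square produce the exponent $(\mu_x^2\sigma_y^2 + 2\mu_x\mu_y + \mu_y^2\sigma_x^2)/(2(1 - \sigma_x^2\sigma_y^2))$. I expect the only real obstacle to be the algebraic simplification of this exponent: completing the square leaves a term $d^2/(4c)$ minus $\mu_y^2/(2\sigma_y^2)$ with $d = \mu_x + \mu_y/\sigma_y^2$, and it takes a careful common-denominator computation to see that the $\mu_y^2/\sigma_y^2$ pieces cancel and leave the claimed symmetric numerator. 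As a sanity check I would verify the degenerate limits, since setting $\sigma_x^2 = 0$ (so $X \equiv \mu_x$) must recover the moment generating function of $Y$ at $\mu_x$, and symmetrically for $\sigma_y^2 = 0$; both follow at once from the closed form.
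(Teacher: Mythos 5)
Your proof is correct and follows essentially the same route as the paper's: condition on $Y$, recognize the inner expectation as the moment generating function of $X$ evaluated at $Y$, and then evaluate the resulting Gaussian expectation of an exponentiated quadratic (the paper invokes a closed form for $\E e^{tX^2}$ where you complete the square directly, which amounts to the same computation). You also supply details the paper's two-line sketch omits --- Tonelli to justify the interchange and the explicit divergence argument when $\sigma_x^2\sigma_y^2 \ge 1$ --- which only strengthens the argument.
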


\begin{proof}
    Use the rules for conditional expectation with the following intermediate results:
    $$
    \E e^{tX} = \exp \left\{\mu_x t + \frac{1}{2} \sigma_x^2 t^2 \right\} 
    \quad \text{and}\;
    \E e^{tX^2} = \dfrac{1}{\sqrt{1-2t\sigma_x^2}} \exp \left\{ \dfrac{\mu_x^2 t}{1-2t\sigma_x^2}\right\}
    \; \text{if}\; 
    2t\sigma_x^2 < 1.
    $$
\end{proof}

Lemma~\ref{lemma:E_exp_XY} indicates that the geometric approximation only provides  reasonable results if the variances are close to zero. % However, in their original implementation no restrictions are imposed on these variances and hence, they could potentially explode for a poor initialization. 
To ensure a finite expectation, we impose the restrictions $\bm \phi_{\ideal}^\sclsq < 1$ and $\bm \phi_{\eta}^\sclsq < 1$. 
We enforce this restriction by using the \emph{sigmoid} bijector instead of the \emph{softplus} bijector in the Tensorflow environment. 

The implementation allows the user to choose between using the original, rather straightforward geometric approximation or the exact, but more complicated expectation given by:
\begin{equation} \label{eq:Edkv}
    E_{dkv} 
	=
	\dfrac{1}{\sqrt{1-\phi_{\eta kv}^\sclsq \phi_{\ideal a_d k}^\sclsq}} \exp 
	\left\{
	\dfrac{1}{2} \dfrac{\left(\phi_{\eta kv}^\loc\right)^2 \phi_{\ideal a_d k}^\sclsq + 2\phi_{\eta kv}^\loc \phi_{\ideal a_d k}^\loc + \left(\phi_{\ideal a_d k}^\loc\right)^2 \phi_{\eta kv}^\sclsq}{1-\phi_{\eta kv}^\sclsq \phi_{\ideal a_d k}^\sclsq}
	\right\}.
\end{equation}
We use the exact expectation in the empirical applications presented. 
%Using the exact expectation, CAVI updates for the parameters included in \eqref{eq:Edkv} are not available. Hence, the stochastic gradient approach is used instead to update them.
%\textcolor{red}{Not true, parameters included in \eqref{eq:Edkv} have to be updated with stochastic gradient approach even with the geometric approximation.}

%%%!!!!!!!!!!!!!!!!!!!!!!!!!!!!!!!!!!!!!!!!!!!!!!!!!!!!!!!!!!!!!!!!!!!!!!!%%%
%%%%%%%%%%%%%%%%%%%%%%%%%%%%%%%%%%%%%%%%%%%%%%%%%%%%%%%%%%%%%%%%%%%%%%%%%%%%%
%%%%%%%%%%%%               NEW SUBSECTION                   %%%%%%%%%%%%%%%%%
%%%%%%%%%%%%%%%%%%%%%%%%%%%%%%%%%%%%%%%%%%%%%%%%%%%%%%%%%%%%%%%%%%%%%%%%%%%%%
%%%!!!!!!!!!!!!!!!!!!!!!!!!!!!!!!!!!!!!!!!!!!!!!!!!!!!!!!!!!!!!!!!!!!!!!!!%%%
\subsection{Details on the combination of SVI and ADVI}\label{app:stochastic-gradients}

\citet{Vafa_etal_2020} estimate the TBIP model solely based on a variant of ADVI where they determine the step size based on the Adam optimizer~\citep{kingma_ba2015Adam}. Although, the evaluation of the gradients is relatively cheap, tracking them for thousands of parameters is not. Therefore, we pursue a similar implementation as also used for the latest formulation of TBIP on Github \citep{Github_TBIP}. This implementation brings the best of both worlds together: CAVI updates are employed where available and automatic differentiation is employed to perform the update steps otherwise. 

To optimize $\ELBO{\bm\phi}$, some of the parameters are moved in the direction of the steepest ascent while other parameters are kept fixed. The contributions to $\ELBO{\bm \phi}$ are essentially expectations of log-densities with respect to the variational families ($\Eq$). By the Monte Carlo principle, this expectation can be approximated by averaging over log-densities evaluated at samples from the variational families. In ADVI, the gradients are tracked by the reparametrization trick \citep{KingmaWelling2014}. However, this trick is unavailable for gamma variational families. This is the reason why \citet{Vafa_etal_2020} used the log-normal variational family for positive parameters in order to be able to implement a pure ADVI procedure.

To obtain the Monte Carlo approximation, one could proceed as follows: One draws samples $\bm \zeta^i, i = 1, \ldots, I$ from the variational family $q_{\bm \phi^t}$ with current parameter values $\bm \phi^t$ and approximates the current value of the $\mathsf{ELBO}$ by
$$
\ELBO{\bm \phi^t} \approx \dfrac{1}{I}\sum\limits_{i=1}^I \left[\log p(\mathbb{Y} | \bm \zeta^i) + \log p(\bm \zeta^i | \mathbb{X}) - \log q_{\bm \phi^t} (\bm \zeta^i)\right].
$$
This is straightforward to implement in Tensorflow \citep[see also][]{Vafa_etal_2020}. Tensorflow also automatically tracks the gradients using automatic differentiation including the reparameterization trick for most of the distributions (except for the gamma distribution). The reparameterization trick implies that, for example, the random variable $\ideal_{ak}^i$ distributed as $\norm{\phi_{\ideal ak}^\loc}{\phi_{\ideal ak}^\sclsq}$ is reparameterized as $\ideal_{ak}^i = \phi_{\ideal ak}^\scl z_{\ideal ak}^i + \phi_{\ideal ak}^\loc$ with auxiliary $z_{\ideal ak}^i \sim \norm{0}{1}$, from which the gradients can be easily derived with $\bm z^i$ as constants. 

This straightforward approach employs the Monte Carlo approximation for all parts of the $\mathsf{ELBO}$. However, examining the individual parts of the $\mathsf{ELBO}$ function indicates that for some parts the approximate solution could be avoided. For example, the last term $-\Eq \log q_{\bm \phi}(\bm\zeta)$ corresponds to the \emph{entropy} of the variational distribution, for which an explicit expression exists \citep[see also][]{Kucukelbir+Tran+Ranganath:2017}. It clearly is more accurate and less time-consuming to evaluate the entropy using the explicit expression once than averaging $I$ log-densities evaluated at random points and then track the gradients. Our implementation makes use of this exact \emph{entropy} calculation.

\subsection{CAVI updates}\label{app:cavi}
In this section we list all CAVI updates used in the implementation. The updates will be denoted by $\widehat{\bm \phi}_\bullet$.
The only parameters for which CAVI updates are not available are the variational parameters $\bm \phi_\ideal$ and $\bm \phi_{\eta}$ for $\bm \ideal$ and $\bm \eta$. This is due to the non-conjugate form of the ideological term. Some of the updates are inspired by~\citet{gopalanhofmanblei2015scalableHPoisF}.
\citet{BleiVI:2017} provide a general approach to obtain CAVI updates. This approach exploits that the optimal variational distribution for a~block of parameters is proportional to the exponentiated expected log of the complete conditional distribution of the respective block. Alternatively, there is always an option to express $\ELBO{\bm \phi}$ as a~function of the current block of parameters up to an additive constant and find the optimal update by solving the system of estimating equations constructed by zeroing the gradients. 

% The fact that the Poisson rates $\lambda_{dv}$ are the sum of $\lambda_{dkv}$ over the topic dimension complicates direct CAVI update derivation. We resolve this by introducing auxiliary Poisson counts $\widetilde{\mathbb{Y}}$. The updates for the auxiliary proportions which divide $y_{dv}$ into $y_{dv} = \widetilde{y}_{d1v} + \cdots + \widetilde{y}_{dKv}$ are of the following form:
% $$
% \widehat{\phi}_{dkv}^y \propto
% \exp \left\{
% \psi(\phi_{\theta dk}^\shp) - \log\phi_{\theta dk}^\rte
% + \psi(\phi_{\beta kv}^\shp) - \log\phi_{\beta kv}^\rte
% + \phi_{\eta kv}^\loc \phi_{\ideal a_d k}^\loc
% \right\}.
% $$

We start with the parameters for $\bm \theta$, $\bm \beta$ and their hyperparameters: 
% We clearly see the similarity to updates by~\citet{gopalanhofmanblei2015scalableHPoisF}:
\begin{align}
    \widehat{\phi}_{\theta dk}^\shp &= a_\theta + %\sum\limits_{d\in \mathcal{D}_a}
	\sum\limits_{v=1}^V y_{dv} \phi_{dkv}^y
	&\text{and} \qquad
	\widehat{\phi}_{\theta dk}^\rte &= \dfrac{\phi_{b \theta a_d}^\shp}{\phi_{b \theta a_d}^\rte}
	+ \sum\limits_{v=1}^V \dfrac{\phi_{\beta kv}^\shp}{\phi_{\beta kv}^\rte} E_{dkv},
 \label{eq:cavi-theta}
    \\
	\widehat{\phi}_{b \theta a}^\shp &= a_\theta' + K |\mathcal{D}_a | a_\theta 
	&\text{and} \qquad
	\widehat{\phi}_{b \theta a}^\rte &= \dfrac{a_\theta'}{b_\theta'}
	+ \sum\limits_{d\in \mathcal{D}_a} \sum\limits_{k=1}^K \dfrac{\phi_{\theta dk}^\shp}{\phi_{\theta dk}^\rte},
 \label{eq:cavi-btheta}
    \\
    \widehat{\phi}_{\beta kv}^\shp &= a_\beta + \sum\limits_{d=1}^D y_{dv} \phi_{dkv}^y
	&\text{and} \qquad
	\widehat{\phi}_{\beta kv}^\rte &= \dfrac{\phi_{b \beta v}^\shp}{\phi_{b \beta v}^\rte}
	+ \sum\limits_{d=1}^D \dfrac{\phi_{\theta dk}^\shp}{\phi_{\theta dk}^\rte} E_{dkv},
 \label{eq:cavi-beta}
	\\
	\widehat{\phi}_{b \beta v}^\shp &= a_\beta' + K a_\beta
	&\text{and} \qquad
	\widehat{\phi}_{b \beta v}^\rte &= \dfrac{a_\beta'}{b_\beta'}
	+ \sum\limits_{k=1}^K \dfrac{\phi_{\beta kv}^\shp}{\phi_{\beta kv}^\rte}.
 \label{eq:cavi-bbeta}
\end{align}
The updates in \eqref{eq:cavi-theta} are only performed for the documents in the current batch, $d \in \mathcal{B}_b$, see Algorithm~\ref{alg:TBIP_CAVI_and_SVI}. For the updates in \eqref{eq:cavi-btheta}, the documents $\mathcal{D}_a$ written by author~$a$ have to be restricted to include only those from the current batch and are accordingly rescaled. The updates in \eqref{eq:cavi-beta} for the $\bm \beta$ parameters have to be restricted and rescaled in the same way to take only documents in the current batch into account. This yields the following stochastic, batch-specific CAVI updates:
\begin{align*}
	\widehat{\phi}_{b \theta a}^\shp &= a_\theta' + \dfrac{D}{|\mathcal{B}_b|} K |\mathcal{D}_a \cap \mathcal{B}_b | a_\theta 
	&\text{and} \qquad
	\widehat{\phi}_{b \theta a}^\rte &= \dfrac{a_\theta'}{b_\theta'}
	+ \dfrac{D}{|\mathcal{B}_b|} \sum\limits_{d\in \mathcal{D}_a \cap \mathcal{B}_b} \sum\limits_{k=1}^K \dfrac{\phi_{\theta dk}^\shp}{\phi_{\theta dk}^\rte},
    \\
    \widehat{\phi}_{\beta kv}^\shp &= a_\beta + \dfrac{D}{|\mathcal{B}_b|} \sum\limits_{d \in \mathcal{B}_b} y_{dv} \phi_{dkv}^y
	&\text{and} \qquad
	\widehat{\phi}_{\beta kv}^\rte &= \dfrac{\phi_{b \beta v}^\shp}{\phi_{b \beta v}^\rte}
	+ \dfrac{D}{|\mathcal{B}_b|} \sum\limits_{d \in \mathcal{B}_b} \dfrac{\phi_{\theta dk}^\shp}{\phi_{\theta dk}^\rte} E_{dkv}.
\end{align*}
Randomness is induced by the batch selection, i.e., the splitting of the documents into batches. Finally, the updates in \eqref{eq:cavi-bbeta} are the first non-stochastic CAVI updates. 

We continue with the parameters that form the triple-gamma prior for $\bm\eta$, namely $\bm \rho^2$ and $\bm b_\rho$, see~\eqref{eq:prior_eta}. This triple-gamma specification \citep{Knaus:2023dynamic} allows for direct CAVI updates due to the normal-gamma conjugacy in contrast to the original specification~\eqref{eq:prior_triple_gamma}. The updates for the shape parameters are fixed:
$$
\widehat{\phi}_{\rho k}^\shp = a_\rho + \frac{V}{2}
\qquad \text{and} \qquad
\widehat{\phi}_{b\rho k}^\shp = a_\rho' + a_\rho.
$$
The rates, on the other hand, require an update in every iteration as they depend on other variational parameters:
$$
\widehat{\phi}_{\rho k}^\rte = \frac{\phi_{b\rho k}^\shp}{\phi_{b\rho k}^\rte} + \frac{1}{2} \sum\limits_{v=1}^V \left[\left(\phi_{\eta kv}^\loc\right)^2 + \phi_{\eta kv}^\sclsq\right]
\qquad \text{and} \qquad 
\widehat{\phi}_{b\rho k}^\rte = \frac{\kappa_\rho^2}{2} \frac{a_\rho'}{a_\rho} + \dfrac{\phi_{\rho k}^\shp}{\phi_{\rho k}^\rte}.
$$

Next, we provide the CAVI updates for the parameters of the author-level regression of the ideological positions $\ideal$ in \eqref{eq:prior_ideal}. The updates for the author-specific precisions of the error terms are straightforward:
$$
\widehat{\phi}_{Ia}^\shp = a_I + \frac{K}{2}
\qquad \text{and} \qquad
\widehat{\phi}_{Ia}^\rte = b_I + \frac{1}{2} \sum\limits_{k=1}^{K}
\left[
\left( \phi_{\ideal ak}^\loc - \T{\bm x_a} \bm \phi_{\iota k}^\loc \right)^2
+ \phi_{\ideal ak}^\sclsq
+ \sum\limits_{l=1}^L x_{al}^2 \phi_{\iota kl}^\sclsq
\right].
$$
The regression coefficients $\bm \iota$ given in~\eqref{eq:prior_iota} are updated by combining the prior location and the classical least squares estimator for linear regression. If we denote by $\bm E_\omega$ the vector of all variational means of $\omega_l^2$, i.e., $E_{\omega l} = \frac{\phi_{\omega l}^\shp}{\phi_{\omega l}^\rte}$, then we can write the updates in the following form:
\begin{align*}
\widehat{\bm\phi}_{\iota k}^\loc 
&=
\left[
\diag \bm E_\omega + \sum\limits_{a=1}^A \dfrac{\phi_{Ia}^\shp}{\phi_{Ia}^\rte}   \bm x_a \T{\bm x_a}
\right]^{-1}
\left(
\left[\diag \bm E_\omega \right] \bm \phi_{\iota \bullet}^\loc
+
\sum\limits_{a=1}^A \dfrac{\phi_{Ia}^\shp}{\phi_{Ia}^\rte} 
\phi_{\ideal ak}^\loc \bm x_a
\right),
\\
\widehat{\phi}_{\iota kl}^\sclsq
&= 
\dfrac{1}{E_{\omega l} + \sum\limits_{a=1}^A \dfrac{\phi_{Ia}^\shp}{\phi_{Ia}^\rte} x_{al}^2}, 
\end{align*}
if the variational family consists of independent normal distributions. Using a multivariate normal variational family for the coefficients, we get the following update for the covariance matrix:
$$
\widehat{\bm \phi}_{\iota k}^\covm = 
\left[
    \diag \bm E_\omega + \sum\limits_{a=1}^A \dfrac{\phi_{Ia}^\shp}{\phi_{Ia}^\rte}   \bm x_a \T{\bm x_a}
\right]^{-1}.
$$
Note that this matrix does not depend on $k$ since the model matrix is the same for all $K$ regressions. Therefore, only one matrix $\bm \phi_{\iota 1}^\covm$ is needed. In practice, the covariance matrix is parameterized through the lower triangular Cholesky factor. 
%In practise, tensorflow allows to parameterize the MV normal distribution through lower triangular Cholesky factor instead of the covariance. Then, the updates are based on the Cholesky decomposition of inverted matrix which is efficiently found through Cholesky decomposition of the original matrix. Traditional 'Cholesky solve' procedure is used to obtain the updates for locations.  

The variational parameters for the non-centrality parameters $\iota_{\bullet l}$ are updated in the following way:
$$
%\phi_{\iota\bullet l}^\loc = \phi_{\iota\bullet l}^\sclsq \left( \dfrac{1}{\sigma_{\iota\bullet l}} \mu_{\iota\bullet l} +  E_{\omega l}\sum\limits_{k=1}^{K} \phi_{\iota kl}^\loc \right)
%\qquad \text{and} \qquad
%\phi_{\iota\bullet l}^\sclsq = \dfrac{1}{\dfrac{1}{\sigma_{\iota\bullet l}} + K E_{\omega l}}.
\widehat{\phi}_{\iota\bullet l}^\loc = \widehat{\phi}_{\iota\bullet l}^\sclsq E_{\omega l}\sum\limits_{k=1}^{K} \phi_{\iota kl}^\loc 
\qquad \text{and} \qquad
\widehat{\phi}_{\iota\bullet l}^\sclsq = \dfrac{1}{1 + K E_{\omega l}}.
$$
The structure of the model implies that even under a multivariate normal variational family we obtain CAVI updates for $\bm \phi_{\iota\bullet}^\covm$ of diagonal form. Hence, for this parameter it is not necessary to use a multivariate normal variational distribution and we work only with $\phi_{\iota\bullet}^\sclsq$.

Finally, we have to update the parameters of the triple-gamma prior for $\bm \iota$, see~\eqref{eq:prior_omega}. Updates for $\bm \omega^2$ and $\bm b_\omega$ are analogical to the updates of $\bm \rho^2$ and $\bm b_\rho$. Parameters for $\bm \omega^2$ are updated as follows:
$$
\widehat{\phi}_{\omega l}^\shp = a_\omega + \frac{K}{2}
\qquad \text{and} \qquad 
\widehat{\phi}_{\omega l}^\rte = b_\omega + \frac{1}{2} \sum\limits_{k=1}^{K} \left[\left(\phi_{\iota kl}^\loc - \phi_{\iota\bullet l}^\loc\right)^2 + \phi_{\iota kl}^\sclsq + \phi_{\iota\bullet l}^\sclsq\right].
$$ 
The updates for the rates $b_{\omega l}$ are also straightforward:
$$
\widehat{\phi}_{b\omega l}^\shp = a_\omega' + a_\omega
\qquad \text{and} \qquad 
\widehat{\phi}_{b\omega l}^\rte = \frac{\kappa_\omega}{2} \frac{a_\omega'}{a_\omega} + \dfrac{\phi_{\omega l}^\shp}{\phi_{\omega l}^\rte}.
$$
Note that the shapes are constant and, therefore, are only determined once at the start of the algorithm.

\section{Additional empirical results}\label{sec:emp-analysis-details}
%%%!!!!!!!!!!!!!!!!!!!!!!!!!!!!!!!!!!!!!!!!!!!!!!!!!!!!!!!!!!!!!!!!!!!!!!!%%%
%%%%%%%%%%%%%%%%%%%%%%%%%%%%%%%%%%%%%%%%%%%%%%%%%%%%%%%%%%%%%%%%%%%%%%%%%%%%%
%%%%%%%%%%%%               NEW SUBSECTION                   %%%%%%%%%%%%%%%%%
%%%%%%%%%%%%%%%%%%%%%%%%%%%%%%%%%%%%%%%%%%%%%%%%%%%%%%%%%%%%%%%%%%%%%%%%%%%%%
%%%!!!!!!!!!!!!!!!!!!!!!!!!!!!!!!!!!!!!!!!!!!!!!!!!!!!!!!!!!!!!!!!!!!!!!!!%%%
\subsection{Overview on the 114th Congress session in the U.S. Senate}\label{app:overview}

There were no changes in Senate membership during the 114th Congress session. Hence, speeches were given by 100 Senators. Republican Senator James Risch delivered less than 24 speeches and was excluded from further analysis. The demographics of the remaining 99 Senators can be summarized in the following way: There are 53 Republicans and 44 Democrats, the remaining 2 are independent Senators. The Senate is dominated by men (79 out of 99). Dividing the USA into 5 regions, we obtain the following counts: Midwest (24), West (25), Northeast (22), Southeast (18) and South (10). \citet{Skelley2023} points out that the Congress gets older than ever, which is in line with the generation composition: Silent generation (1928--1945, 22), Boomers (1946--1964, 62) and Gen X (1965--1980, 15). 13 Senators were newcomers to the Senate, 6 of them were even completely new to the Congress and are referred to as ``Freshman''; 30 Senators had already served in more than 10 Congress sessions and are classified as ``Experienced''; the remaining 63 ones are classified as ``Advanced''. 

Religion  has a large set of categories once Christianity is split by different churches. We decided to keep all categories having at least 7 members and combine the rest into one baseline category ``Other'' (18 Senators: Unspecified/Other Protestant, Anglican/Episcopal, Congregationalist, Nondenominational Christian, Buddhist or those who refused to state their religion). The kept categories include: Catholic (25), Presbyterian (14), Baptist (10), Jewish (9), Methodist (9), Lutheran (7), Mormon (7).

\subsection{Important speeches for selected topics}\label{app:speeches}

The most influential speeches are found in a two-step procedure. First, the posterior mean of the document intensities $\theta_{dk}$, i.e.,  $\E \theta_{dk} = \dfrac{\phi_{\theta dk}^\shp}{\phi_{\theta dk}^\rte}$, is evaluated. Given topic $k$, only the subset of documents (of chosen size, e.g., $|\mathcal{B}|$) maximizing these posterior means is considered in the following computations.

In the second step, a quantity inspired by the likelihood-ratio test statistic is evaluated. For topic $k$, we wish to explore which documents are responsible for the ideological discrepancies, that is, to ``test'' $\ideal_{\cdot, k} = \bm 0$. Under the alternative, we have 
$$
\lambda_{dv}^{1} = \sum\limits_{k'=1}^K \theta_{dk'} \beta_{k'v} \exp\left\{\ideal_{a_d k'} \eta_{k'v}\right\},
$$
but under the null,
$$
\lambda_{dv}^{0,k} = \sum\limits_{k'=1; k'\neq k}^K \theta_{dk'} \beta_{k'v} \exp\left\{\ideal_{a_d k'} \eta_{k'v}\right\} + \theta_{dk} \beta_{kv} 
= 
\lambda_{dv}^{1} + \underbrace{\theta_{dk} \beta_{kv} \left(1 - \exp\left\{\ideal_{a_d k} \eta_{kv}\right\}\right)}_{\lambda_{dv}^{\mathsf{dif},k}}.
$$
Then, the log-likelihood ratio test statistic that assesses the contribution of document~$d$ to non-zero ideology for topic~$k$ is approximated by
$$
\chi_{dk} = -2 \sum \limits_{v=1}^V \log \dfrac{p(y_{dv} | \lambda_{dv}^{0,k})}{p(y_{dv}|\lambda_{dv}^{1})} 
= 
2 \sum \limits_{v=1}^V \left[y_{dv} \log \dfrac{\lambda_{dv}^{1}}{\lambda_{dv}^{1} + \lambda_{dv}^{\mathsf{dif},k}} + \lambda_{dv}^{\mathsf{dif},k} \right],
$$
where the $\lambda$ parameters are computed by plugging-in the posterior mean estimates of the individual parameters. 
%Unfortunately, re-estimation under $\ideal_{\cdot, k} = \bm 0$ for every $k$ would lead to superfluous computational costs, hence, we settle with this crude approximation. 
We then order the documents in the batch by $\chi_{dk}$ and consider only those with the highest values.

Below, we include excerpts from some of the speeches identified in this way because they were ranked among the top 20. In addition we also provide some information about the author and their estimated ideological position for that topic.  

\subsubsection*{Topic 4 (Commemoration and Anniversaries)}

Richard Durbin (Democrat, IL, 20150304, $\ideal_{\text{RD},4} \doteq -1.3$):
\textit{
Mr. President, today is the 150\emph{th anniversary} of the second inaugural address of President \emph{Abraham Lincoln}. 
\ldots
It was in the same room on January 20. 2009, that a newly inaugurated President \emph{Barack Obama} signed his first official documents as President of the \emph{United States}. And it was in this room that \emph{Abraham Lincoln} worked long into the night before his second inauguration.
}

Richard Durbin (Democrat, IL, 20150709, $\ideal_{\text{RD},4} \doteq -1.3$):
\textit{
Mr. President, this week the \emph{United States} held a special ceremony to commemorate one of the longest wars in our Nations history the \emph{Vietnam war}. It was a ceremony to honor the \emph{men and women} who served in that long and searing conflict, especially the more than 58.000 young Americans who did not come home from the battle. The Congressional ceremony was held to commemorate what organizers, including the Department of Defense, call the 50\emph{th anniversary} of the \emph{Vietnam war}.
}

Mitch McConnell (Republican, KY, 20151214, $\ideal_{\text{MC},4} \doteq 0.4$):
\textit{
Mr. President, I wish to \emph{pay tribute} today to a distinguished airman and honored Kentuckian who has given over four decades of his life to military service. Maj. Gen. Edward W. Tonin, for 8 years the adjutant general of the Commonwealth of Kentucky, retired from service on General Tonini is a career Air \emph{National Guard} officer and was appointed adjutant general by the former Governor in 2007.
}

Kelly Ayotte (Republican, NH, 20150113, $\ideal_{\text{KA},4} \doteq 0.4$):
\textit{
Mr. President, I wish to recognize the exceptional service and the extraordinary life of Lt. Col. Stephanie Riley of Concord, NH. Born and raised in Henniker, NH. Stephanie graduated from Henniker \emph{High School} in 1984. An excellent student. Stephanie attended St. Pauls advance studies program the summer before her senior year and was the valedictorian of her \emph{high school} class. In 1988, she graduated cum laude from Boston Colleges School of Nursing and in 1989 was commissioned into the U.S. \emph{Air Force}, where she completed a 4month nursing internship at Travis \emph{Air Force} Base in California. Following her internship, she was stationed at the Barksdale \emph{Air Force} Base in Louisiana for the remainder of her 3year tour. 
\ldots 
Showing both her love for the military and her home State, she returned to New Hampshire in 2000 and joined the U.S. \emph{Air Force} Reserves in Westover, MA, and then the NH Air \emph{National Guard} in 2003.
}

\subsubsection*{Topic 9 (Veterans and Health Care)}

John Isakson (Republican, GA, 20151217, $\ideal_{\text{JI},9} \doteq -1.5$):
\textit{
Then we have set goals for next year, a full implementation of the \emph{Veterans Choice} Program and a consolidation of all veterans benefits and VA benefits to see to it that veterans get timely appointments and good-quality services from the physicians in the VA or physicians in their communities.
\ldots
We are going to ensure access to \emph{mental health} so no veteran who finds himself in trouble doesn't have immediate access to counsel. On that point, I commend the \emph{Veterans Administration} for the hotline. The \emph{suicide prevention} hot-line that they established has helped to save lives in this country this year.
}

John Barrasso (Republican, WY, 20160714, $\ideal_{\text{JB},9} \doteq -2.3$):
\textit{
Republicans passed legislation to repeal the Presidents \emph{health care law}. Why? So we can replace it with \emph{health care} reforms that work for the American people. We want to act, and we acted to protect the American people from a \emph{health care law} that has harmed so many people across the country and that so many people feel has absolutely punished them. President \emph{Obama} vetoed the legislation, and Democrats in Congress resisted every attempt to undo years of damage caused by \emph{ObamaCare}.
}

\subsubsection*{Topic 11 (Climate Change)}

Sheldon Whitehouse (Democrat, Boomer generation, RI, 20150721, $\ideal_{\text{SW},11} \doteq -2.3$):
\textit{
This is the 107th time I have come to the floor to urge my colleagues to wake up to the threat of \emph{climate change}. All over the \emph{United States}, State by State by State, we are already seeing the real effects of \emph{carbon pollution}. We see it in our atmosphere, we see it in our oceans, and we see it in our weather, in habitats, and in species. The American people see it. Two thirds of Americans, including half of Republicans, favor government action to reduce \emph{global warming}, and two thirds,  including half of Republicans, would be more likely to vote for a candidate who campaigns on fighting \emph{climate change}.
}

Sheldon Whitehouse (Democrat, Boomer generation, RI, 20160209, $\ideal_{\text{SW},11} \doteq -2.3$):
\textit{
Investigative author Jane Mayer has written an important piece of journalism, her new book, "Dark Money" - about the secret but massive influence buying rightwing billionaires led by the infamous \emph{Koch brothers}.
\ldots
The story of dark money and the story of \emph{climate change} denial are the same story - two sides of the same coin.
\ldots
This machinery of phony science has been wrong over and over. It was wrong about tobacco, wrong about lead paint, wrong about ozone, wrong about mercury, and now it is wrong about \emph{climate change}. They are the same organizations, the same strategies, the same funding sources.
}

Sheldon Whitehouse (Democrat, Boomer generation, RI, 20160712, $\ideal_{\text{SW},11} \doteq -2.3$):
\textit{
The Thomas Jefferson Institute prominently displays a statue of Jefferson on its Web page and claims to be a nonpartisan supporter of "environmental stewardship", but the institute is an outspoken critic of the Presidents \emph{Clean Power Plan} and \emph{renewable sources} of energy and actively sows doubt about climate science.
}

\subsubsection*{Topic 13 (Planned Parenthood and Abortion)}

Dianne Feinstein (Democrat, Jewish, CA, 20150924, $\ideal_{\text{DF},13} \doteq -1.0$):
\textit{
I rise once again to speak against this callous misguided effort to \emph{defund Planned Parenthood}. This is a clear case of politics being put ahead of the country's best interests. This time the majority has tied this effort to the funding of the entire Federal Government, they are willing to \emph{shut down the government} over this issue. That is preposterous. \emph{Planned Parenthood} serves some of the most vulnerable women in our society. It cares for 2.7 million patients in the U.S.- million patients worldwide.
\ldots
Community \emph{health centers} and clinics do great work, but if 2.7 million \emph{Planned Parenthood} patients were suddenly without a doctor, they simply could not handle the sudden influx of new patients.
\ldots
We are standing up for \emph{Planned Parenthood} because we stand up for women. I urge a "no" vote. Lets defeat this bill and move on so we can fund the government and address many other critical issues.
}

Debbie Stabenow (Democrat, Methodist, MI, 20150730, $\ideal_{\text{DS},13} \doteq -1.1$):
\textit{
In my State, 40 percent of the \emph{Planned Parenthood} \emph{health clinics} are located in areas we call medically underserved. There isn't access to other kinds of clinics or \emph{health care}. There may not be a hospital nearby or there may not be many doctors nearby. We are talking about basic \emph{health care}. Unfortunately, we see politics played with women's \emph{preventive health care} and \emph{family planning} over and over again in attacks on \emph{Planned Parenthood}.
\ldots
Fortunately, the vast majority of the American people recognize the value of having \emph{health clinics} like \emph{Planned Parenthood} that are dedicated to serving women's \emph{health care} needs in every community across the country. That is why a poll shows that 64 percent of voters oppose the move by congressional Republicans to \emph{defund Planned Parenthood} and therefore \emph{preventive health care} services such as mammograms, cancer screenings, blood pressure checks, and access to \emph{birth control}.
}

Kelly Ayotte (Republican, Catholic, NH, 20150922, $\ideal_{\text{KA},13} \doteq -0.2$):
\textit{
An issue has come up that is a very important issue. And that is an organization called \emph{Planned Parenthood} and holding \emph{Planned Parenthood} accountable in the wake of deeply disturbing videos that discuss the appalling practice of harvesting the organs and body parts of \emph{unborn babies}.
I was just sick to see the contents of recent videos that have been disclosed that show a callous disregard by officials at \emph{Planned Parenthood} for the dignity of \emph{human life}.
\ldots 
That is why last month I joined a bipartisan majority of Senators in voting to redirect Federal funding from \emph{Planned Parenthood} to community \emph{health centers} that provide \emph{womens health services}.
\ldots
So in good conscience, while I fully support redirecting the money from \emph{Planned Parenthood} to community \emph{health centers} who serve women, I cannot in good conscience participate again in this process, one that would ensure we come closer to the brink of a shutdown, when I have not heard a strategy for success.
}

Charles Grassley (Republican, Baptist, IA, 20150921, $\ideal_{\text{CG},13} \doteq 1.1$):
\textit{
Mr. President, the Popes visit this week to our Nations Capital reminds us all of how very important it is to show compassion and concern for the most innocent and vulnerable among us. \emph{Unborn children} who fall into this category are entitled to the same dignity all human beings share. This is true even when their presence might be uncomfortable or create difficulties, the Pope reminds us. We are now considering moving to a bill known as the Pain-Capable \emph{Unborn Child Protection Act}. This legislation would make no change to our abortion policy in the first 20 weeks of pregnancy. At 20 weeks of fetal age, when the \emph{unborn child} can detect and respond to painful stimuli, the bill would impose some restrictions on elective late-term abortions.
\ldots
I say to them, if you do not support restrictions on abortion after the fifth month of pregnancy, when some babies born prematurely at this stage now are surviving long-term, then what exactly is your limit on abortion? Scientists say the \emph{unborn child} can \emph{feel pain} perhaps even as early as 8 weeks and most certainly by 20 weeks fetal age. The American people overwhelmingly support restrictions on late-term abortions.
\ldots
I urge my colleagues to embrace the sanctity of \emph{human life} and vote to move to this bill so it can at least be considered.
}

Ted Cruz (Republican, Baptist, TX, 20150928, $\ideal_{\text{TC},13} \doteq 1.8$):
\textit{
\emph{Planned Parenthood} is a private organization. It is not even part of the government. But it happens to be politically favored by President Obama and the Democrats. \emph{Planned Parenthood} is also the subject of multiple criminal investigations for being caught on tape apparently carrying out a pattern of ongoing felonies.
\ldots
I am advocating that we fight on things that matter. Don't give \$500 million to \emph{Planned Parenthood}, a corrupt organization that is taking the lives of vast numbers of \emph{unborn children} and selling their body parts.
}

\subsubsection*{Topic 16 (Immigration and Department of Homeland Security)}

Mazie Hirono (Democrat, HI (West), Advance experience, 20150127, $\ideal_{\text{MH},16} \doteq -1.4$):
\textit{
I rise today on the important issue of \emph{funding the Department of Homeland Security} and to urge my colleagues to come together and pass a clean appropriations bill with regard to this agency. The Department of \emph{Homeland Security}, or DHS, is charged with \emph{border security} and immigration enforcement.
\ldots
Now is the time when we should be working together on commonsense and \emph{comprehensive immigration reform} that the vast majority of Americans support. \emph{Comprehensive immigration reform} is supported by 70 percent of the American people. In the past Congress, nearly 70 percent of the Senate supported our bipartisan immigration bill. Our bipartisan bill was a compromise. It strengthened \emph{border security}, modernized our system, addressed visa backlogs, and allowed millions of undocumented people to step out of the shadows, get in line, and work toward becoming American citizens.
\ldots
Recklessly shutting down the Department of \emph{Homeland Security} will not fix our \emph{broken immigration system}.
}

Edward Markey (Democrat, MA (Northeast), Experienced, 20150225, $\ideal_{\text{EM},16} \doteq -0.8$):
\textit{
Mr. President, everyone agrees that our \emph{immigration system is broken}. The \emph{immigration system} we have now hurts our economy, and it hurts our \emph{national security}. The Senate passed a bipartisan immigration bill, the House of Representatives chose not to act. Again, the Senate passed a \emph{comprehensive immigration} bill. That is why I supported the Executive action by President Obama to address our immediate immigration crisis. We cannot wait for the House of Representatives Republicans to act, and that is because immigration is one of our country's greatest strengths. Immigrants are a vital part of the fabric of Massachusetts and of our country. They start businesses, they create jobs, and they contribute to our communities. The Presidents Executive order recognizes the value of immigrants to our country. 
\ldots
What unites us in Massachusetts and all across America is the unshakable belief that no matter where you come from, no matter what your circumstances, you can achieve the American dream. The \emph{immigration system} we have now doesn't reflect those values. Unfortunately, instead of working to fix the problems with our \emph{immigration system}, the majority of the Senate has been manufacturing a government shutdown of the Department of \emph{Homeland Security}.
\ldots
We should not have any question raised about the Department of \emph{Homeland Security} being on the job protecting our citizens and providing the security our country needs. That is where we are right now, and the Republicans are holding up the funding of this vital agency under the misguided notion that they are going to be able to write the entire \emph{comprehensive immigration} bill inside a Department of \emph{Homeland Security} budget. It is not going to happen. Everyone in this country knows it is not going to happen. The Republicans are playing a dangerous game with the security of our country. I ask all who make the decisions in the Republican Party to please tell their most radical Members that the Department of \emph{Homeland Security} must be funded. It must be funded this week. We must not only pay those who work for us, but we should thank them every day for the security they provide to our country.
}

Jefferson Sessions (Republican, AL (Southeast), Advanced experience, 20150130, $\ideal_{\text{JS},16} \doteq 1.6$):
\textit{
Colleagues, this is so serious that the \emph{Immigration and Customs Enforcement} officials, their association filed a lawsuit, and they challenged the actions of their supervisors telling them not to enforce plain \emph{immigration law}.
\ldots
I will be a little bit courteous at this point and just quote some of the statements from all separate Democratic Senators in the last few months when asked about this Executive amnesty by the President. A lot of Senators have never been asked. They are probably thankful they weren't asked. This is what one Senator said: ... but the President shouldn't make such a significant policy change on his own. Another Democratic Senator: ... but executive orders aren't the way to do it. Another Senator: I disagree with the Presidents decision to use executive action to make changes to our \emph{immigration system}. Another Democratic Senator: I'm disappointed the President decided to use executive action at this time on this issue, as it could poison any hope of compromise or bipartisanship in the new Senate before it has even started. Its Congress job to pass legislation and deal with issues of this magnitude. Absolutely correct. It is Congress's duty to do this. What about another Democratic Senator: I worry that his taking unilateral action could in fact inflame public opinion, change the subject from immigration to the President. I also have constitutional concerns about where prosecutorial discretion ends and unconstitutional authority begins. A wise quote. I think. Another Senator: I have concerns about executive action... This is a job for Congress. and its time for the House to act. Another Democratic Senator: ... the best way to get a comprehensive solution is to take this through the legislative process. So I would say, colleagues, why would any Senator, Democrat or Republican, when the very integrity of the constitutional powers given to Congress are eroded in a dramatic way by the President of the \emph{United States} not want to assert congressional authority?
\ldots
People are coming from abroad. They want to come to America. We have always had the most generous \emph{immigration system} in the world, and we believe in immigration. But they should come lawfully and the Congress should help create a system that supports a lawful entry into America. The council that represents the \emph{Customs and Immigration Service} Officers just January 22nd of this year issued a strong statement.
\ldots
They said: The dedicated immigration service officers and adjudicators at USCIS are in desperate need of help. The \emph{Presidents executive amnesty} order for 5 million \emph{illegal immigrants} places the mission of USCIS in grave peril.
\ldots
The Democrats are saying: we are not even going to go to this bill that would fund \emph{Homeland Security}. And if we don't go to it, then \emph{Homeland Security} is not funded. Are they going to block a bill that would fund \emph{Homeland Security}? Senator McConnell is saying you can have your relevant amendment. If you don't like the language the House put in that says the money can only go to fund lawful activities.
}

Charles Grassley (Republican, IA (Midwest), Experienced, 20151019, $\ideal_{\text{CG},16} \doteq 1.6$):
\textit{
We will move to take up the Stop Sanctuary Policies and Protect Americans Acta bill that should put an end to sanctuary jurisdictions, give law enforcement important tools they need to detain criminals, and increase penalties for dangerous and repeat offenders of our \emph{immigration laws}.
\ldots
America saw these policies play out in July when Kate Steinle was innocently killed while walking along a San Francisco pier with her father. The murderer, who was illegally in the country and actually deported five times prior to that day, was released into the community by a sanctuary jurisdiction that did not honor the detainer issued by \emph{Immigration and Customs Enforcement}.
\ldots
We heard from Mrs. Susan Oliver. She is the widow of Deputy Danny Oliver. This is the family. He was a police officer in Sacramento. CA. Danny was killed while on duty by an \emph{illegal immigrant} who was previously arrested on two separate occasions for drug-related charges and twice deported.
\ldots
Moreover, in June of this year, the administration rolled out a new program that reduces the enforcement priorities and announced it would not seek the custody of many criminals who are in the country illegally. This is called the Priority Enforcement Program, PEP for short. That program actually gives sanctuary jurisdictions permission to continue ignoring \emph{Immigration and Customs Enforcement} detainers. PEP even discourages compliant jurisdictions from further cooperation with \emph{Immigration and Customs Enforcement} because it now only issues detainers for individuals who are already convicted of certain crimes deemed priorities by the Department of \emph{Homeland Security}.
}

\subsubsection*{Topic 24 (Export, Import, and Business)}

Jon Tester (Democrat, MT, 20151007, $\ideal_{\text{JT},24} \doteq -1.6$):
\textit{
And the \emph{Land and Water Conservation Fund} is a big reason for that. Montana's outdoor economy brings in nearly \$6 billion a year. Let me say that again. The outdoor economy, supported by the \emph{Land and Water Conservation Fund}, brings in nearly \$6 billion a year.
\ldots
While Montanans certainly benefit from the fund, there are \emph{Land and Water Conservation Fund} projects in nearly every county of the \emph{United States}. Yes, this fund is responsible for protecting prime hunting and fishing, but it is also responsible for building trails and improving parks, playgrounds, and ball fields in every State in the country.
}

Heidi Heitkamp (Democrat, ND, 20151105, $\ideal_{\text{HH},24} \doteq -0.7$):
\textit{
I wish to take a few moments today to talk about \emph{small business}, to talk about the people who have been dramatically affected by the closure of the \emph{ExIm Bank} and why it is so important that we understand, appreciate, and not have a long-term history that does not move the \emph{ExIm Bank} forward.
\ldots
The \emph{ExportImport Bank} has been closed for over 3 months, preventing needed support for \emph{small business} across the country.
\ldots
We believe we are ready, willing, and excited about the opportunity of once again opening the doors of the \emph{ExportImport Bank} and welcoming American business in and saying once again. "America is open for business" to the rest of the world.
}

Debbie Stabenow (Democrat, MI, 20150519, $\ideal_{\text{DS},24} \doteq 1.2$):
\textit{
What we believe on behalf of our constituents, the people we represent, are the most important things that we want to make sure are covered: enforcement, strong labor and environmental standards, and the No. 1 trade distorting policy in the world today, which is \emph{currency manipulation}. We want to be able to say, if you are going to get this special ability to take away our right to change something, then we expect certain things. We expect that we are going to be negotiating from a position of strength so that we are racing up in the world economy, bringing other countries up in terms of wages, what is happening in terms of protecting our environment, protecting our \emph{intellectual property} rights, stopping other countries from cheating on currency or other trade violations.
\ldots
We are saying that if we are going to let you go into a negotiation and come out with a \emph{trade agreement} of 40 percent of the global economy in Asia and where we are seeing the bulk of the \emph{currency manipulation}, then we believe there ought to be an enforceable standard, that we ought to have an expectation of a \emph{currency manipulation} provision that would be enforceable at least as a negotiating objective.
\ldots
makes me very concerned about what is really going on in the \emph{TransPacific Partnership}. I urge my colleagues to join me and Senator Portman in passing this very reasonable amendment to make \emph{currency manipulation} a priority in our negotiations.
}

Orrin Hatch (Republican, UT, 20150521, $\ideal_{\text{OH},24} \doteq 1.7$):
\textit{
I want to take a few minutes today to talk once again about Congress's role in advancing our Nations trade policies and specifically on the increasingly important issues of digital trade and \emph{intellectual property} rights. Lets keep in mind that the last time Congress passed TPA (\emph{Trade Promotion Authority}) was in 2002. We live in a very different world than we did 13 years ago. Technology is vastly different.
\ldots
I would urge any of my colleagues who oppose this bill to explain how they plan to give \emph{American workers} and businesses in the digital economy an opportunity to thrive in an increasingly competitive marketplace - global marketplace.
\ldots
Our bill also calls for the elimination of measures that require U.S. companies to locate their \emph{intellectual property} abroad as a market access or investment condition.
\ldots
I urge each of my colleagues to work with me to help move this bill forward so we can negotiate strong \emph{trade agreements} that serve today's economy as well as set the stage for Americas next generation of entrepreneurs and innovators.
}

\subsection{Word clouds}\label{app:wordclouds}

\begin{figure}
    \centering
    \includegraphics[width=0.97\textwidth]{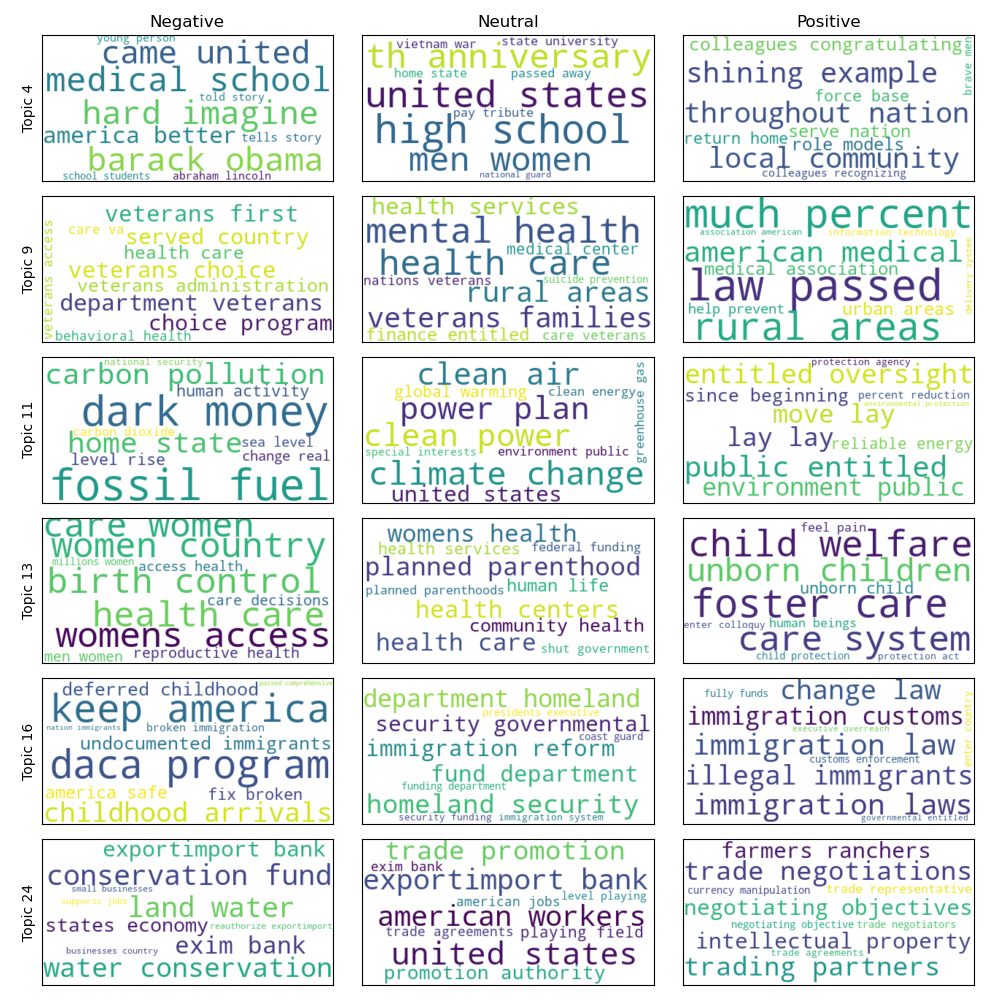}
    \caption{Word clouds for selected topics. The neutral column is based on the variational mean of $\beta_{kv}$. Positive and negative columns correspond to the $\eta_{kv}$ contribution indicating the terms that discriminate between the ideological positions conditional on  $\E \log\beta_{kv} > -1$.}
    \label{fig:TBIPhier_all_wordclouds_justeta_logscale_selected_topics}
\end{figure}

\begin{figure}
    \centering
    \includegraphics[width=0.97\textwidth]{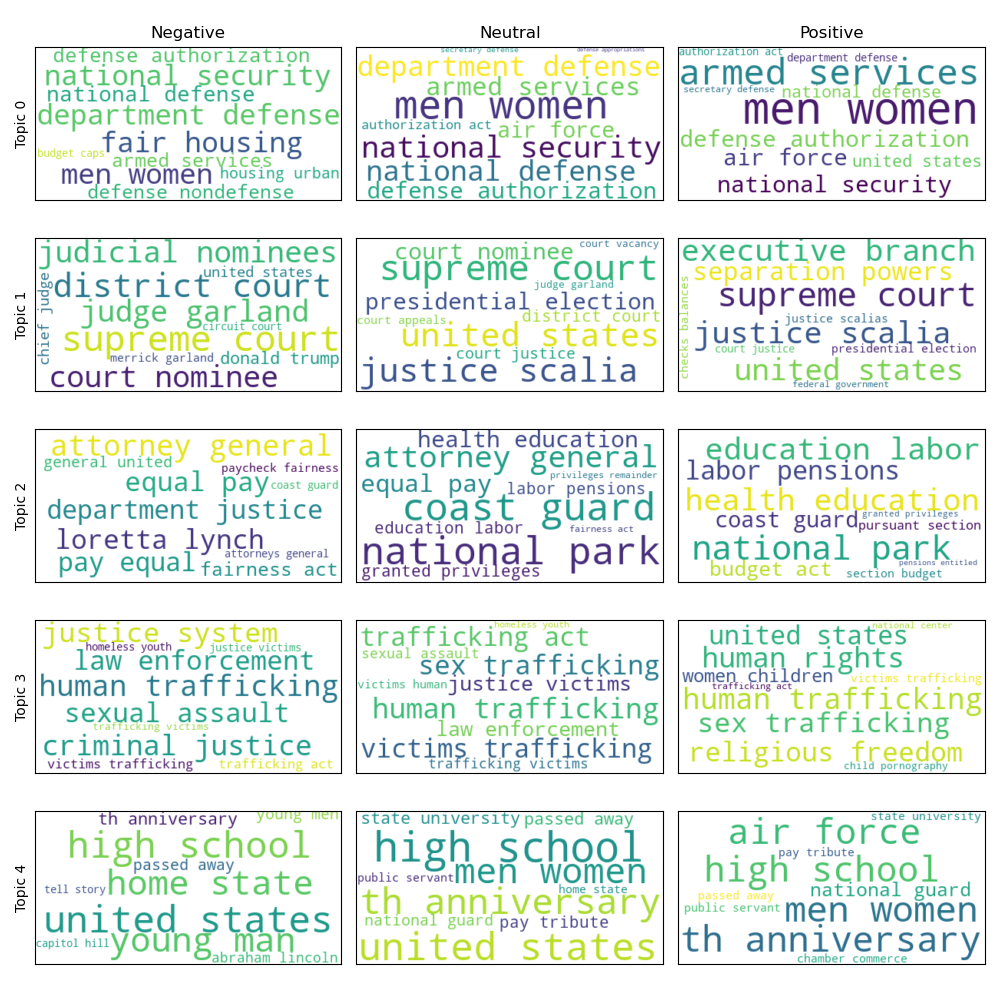}
    \caption{Word clouds for topics 0--4. Positive and negative columns correspond to the variational mean of $\bm \beta$ multiplied by the variational mean of the ideological term for an ideological position equal to 1 or $-1$, respectively.}
    \label{fig:TBIPhier_all_wordclouds_logscale_topics_0_4}
\end{figure}

\begin{figure}
    \centering
    \includegraphics[width=0.97\textwidth]{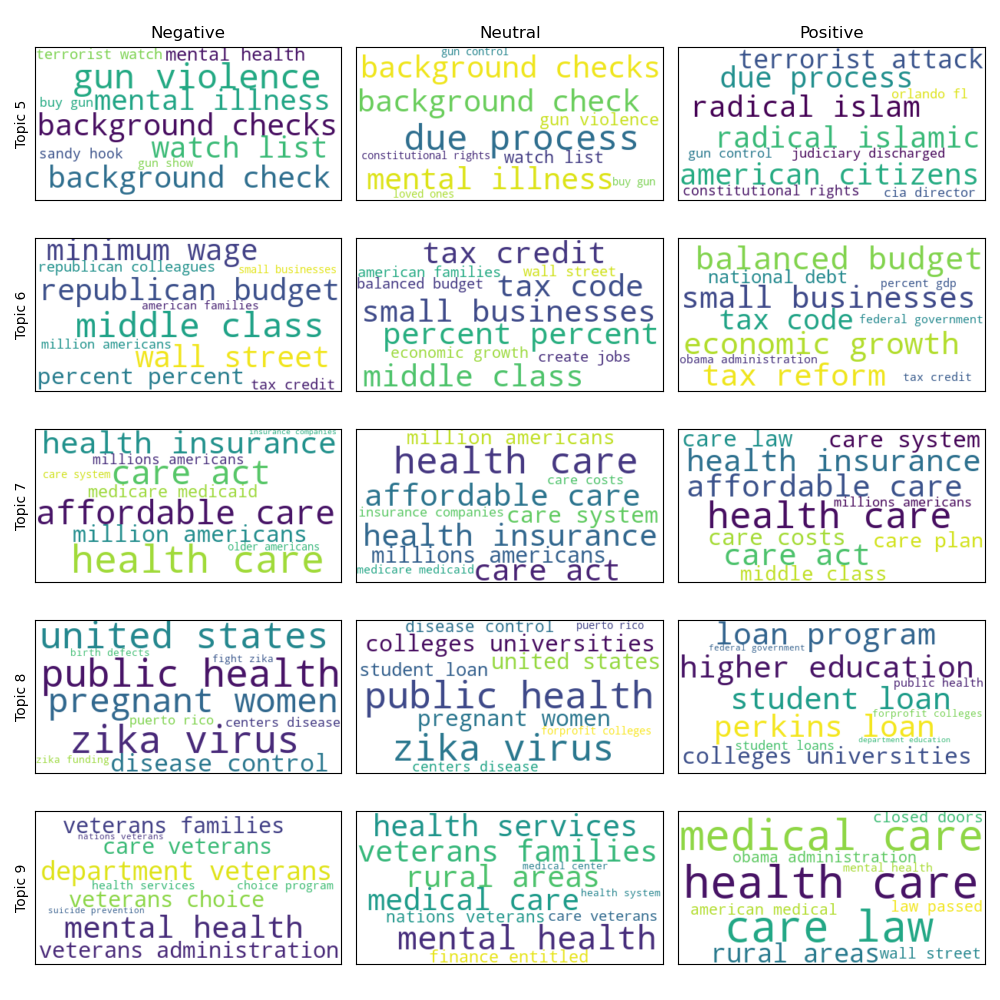}
    \caption{Word clouds for topics 5--9. Positive and negative columns correspond to the variational mean of $\bm \beta$ multiplied by the variational mean of the ideological term for an ideological position equal to 1 or $-1$, respectively.}
    \label{fig:TBIPhier_all_wordclouds_logscale_topics_5_9}
\end{figure}

\begin{figure}
    \centering
    \includegraphics[width=0.97\textwidth]{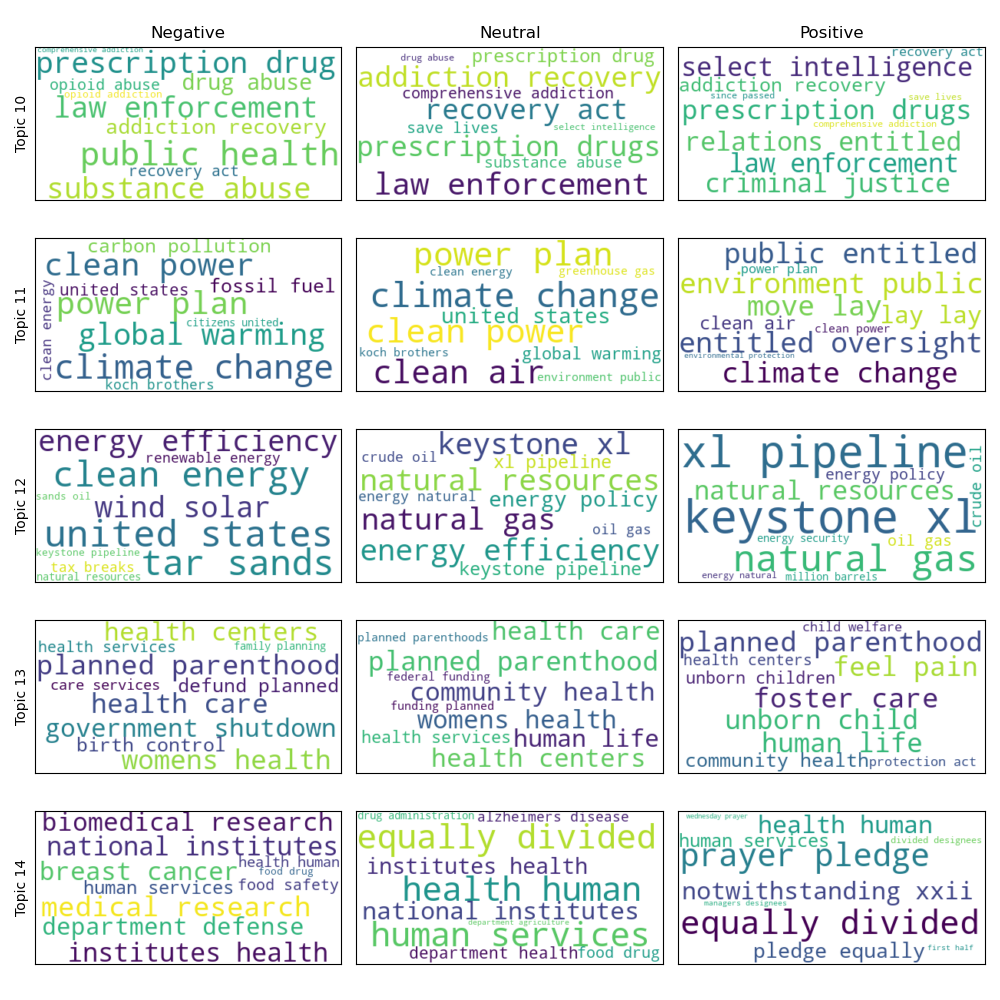}
    \caption{Word clouds for topics 10--14. Positive and negative columns correspond to the variational mean of $\bm \beta$ multiplied by the variational mean of the ideological term for an ideological position equal to 1 or $-1$, respectively.}
    \label{fig:TBIPhier_all_wordclouds_logscale_topics_10_14}
\end{figure}

\begin{figure}
    \centering
    \includegraphics[width=0.97\textwidth]{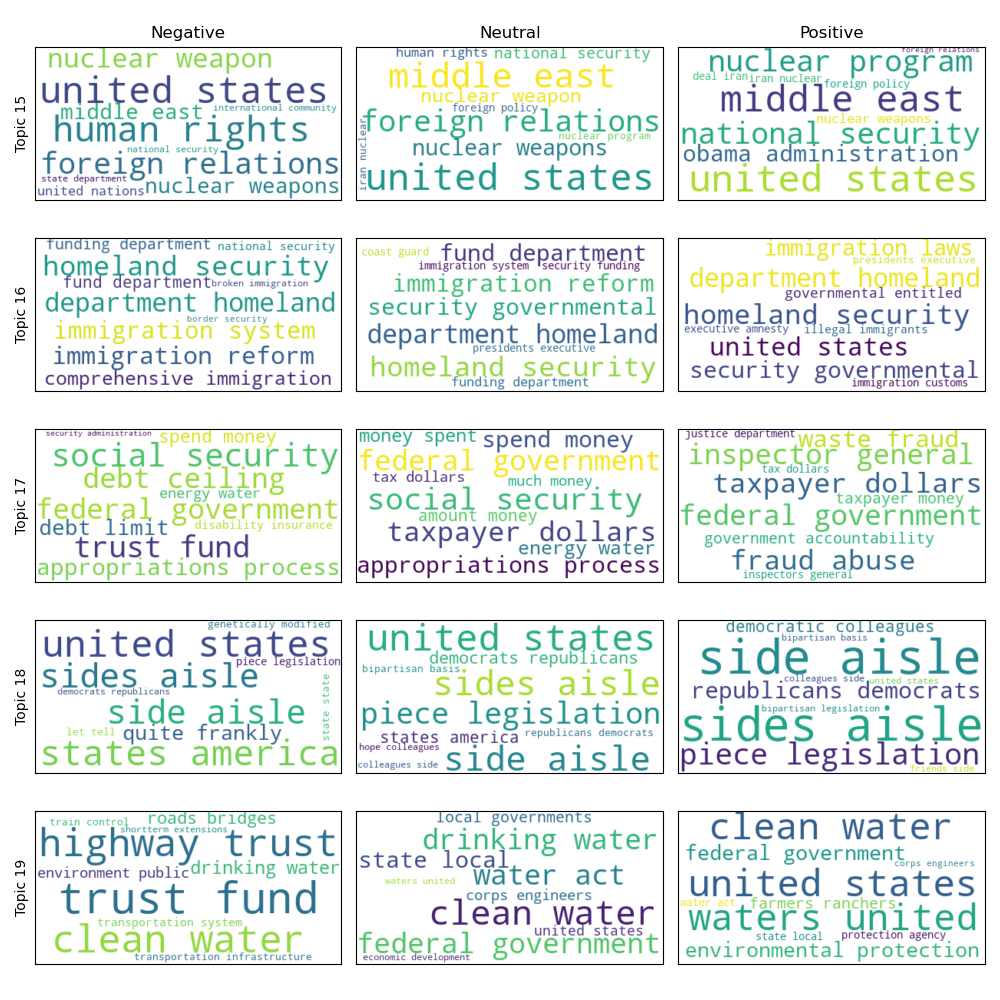}
    \caption{Word clouds for topics 15--19. Positive and negative columns correspond to the variational mean of $\bm \beta$ multiplied by the variational mean of the ideological term for an ideological position equal to 1 or $-1$, respectively.}
    \label{fig:TBIPhier_all_wordclouds_logscale_topics_15_19}
\end{figure}

\begin{figure}
    \centering
    \includegraphics[width=0.97\textwidth]{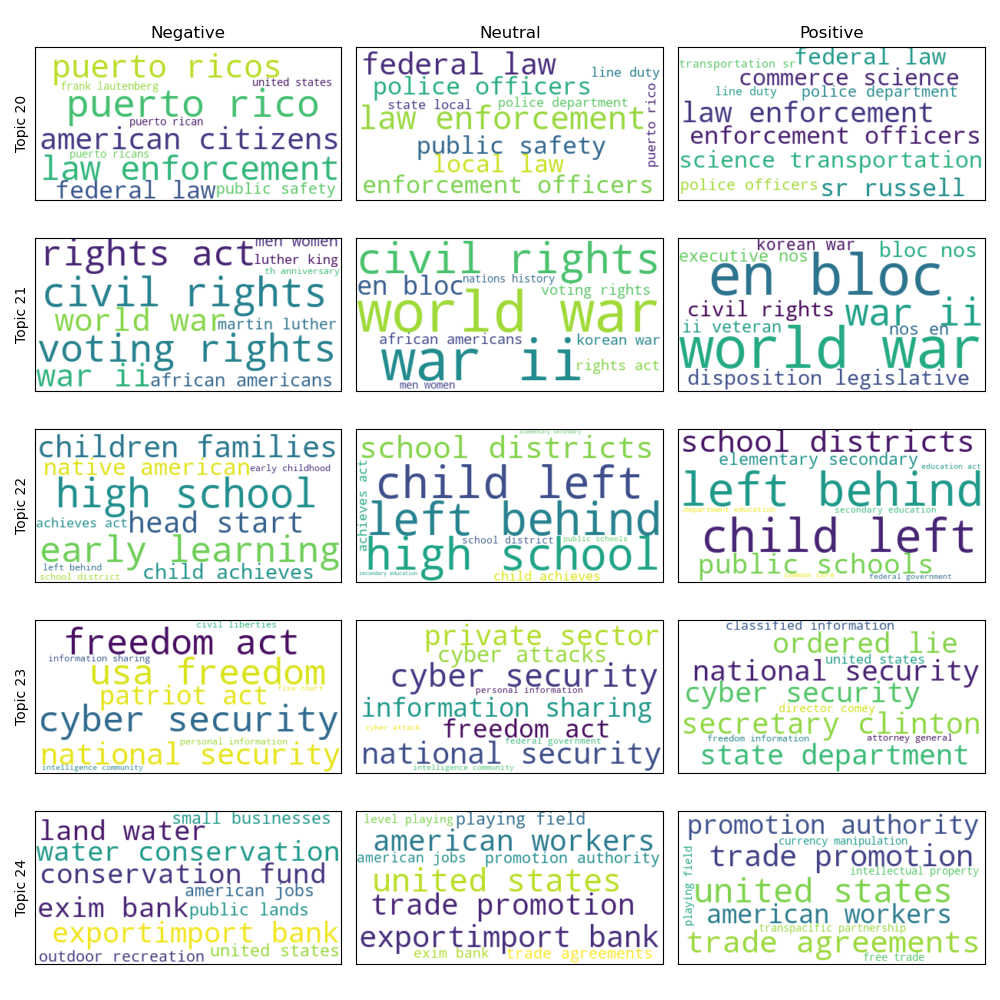}
    \caption{Word clouds for topics 20--24. Positive and negative columns correspond to the variational mean of $\bm \beta$ multiplied by the variational mean of the ideological term for an ideological position equal to 1 or $-1$, respectively.}
    \label{fig:TBIPhier_all_wordclouds_logscale_topics_20_24}
\end{figure}

\clearpage
%%%!!!!!!!!!!!!!!!!!!!!!!!!!!!!!!!!!!!!!!!!!!!!!!!!!!!!!!!!!!!!!!!!!!!!!!!%%%
%%%%%%%%%%%%%%%%%%%%%%%%%%%%%%%%%%%%%%%%%%%%%%%%%%%%%%%%%%%%%%%%%%%%%%%%%%%%%
%%%%%%%%%%%%               NEW SUBSECTION                   %%%%%%%%%%%%%%%%%
%%%%%%%%%%%%%%%%%%%%%%%%%%%%%%%%%%%%%%%%%%%%%%%%%%%%%%%%%%%%%%%%%%%%%%%%%%%%%
%%%!!!!!!!!!!!!!!!!!!!!!!!!!!!!!!!!!!!!!!!!!!!!!!!!!!!!!!!!!!!!!!!!!!!!!!!%%%
\subsection{Regression coefficients}\label{app:coefs}

\renewcommand{\arraystretch}{0.75}

\begin{table}[h!]
    \centering
    \resizebox{0.5\textwidth}{!}{
        \begin{tabular}{ll|rrrr}
\noalign{\smallskip}
\toprule
\multirow{2}{*}{Coefficient} & \multirow{2}{*}{Category}  & \multicolumn{4}{c}{}\\
 &  & Estimate & SE & CCP & CCP (all)\\
\midrule
\noalign{\smallskip}
\texttt{intercept} &  & $-0.535$ & $0.097$ & $<0.001$ & \\
\noalign{\smallskip}
\midrule
\noalign{\smallskip}
\multirow{2}{*}{\texttt{party}} & \texttt{Republican} & $0.734$ & $0.067$ & $<0.001$ & \multirow{2}{*}{$<0.001$}\\
 & \texttt{Independent} & $-0.064$ & $0.171$ & $0.709$ & \\
\noalign{\smallskip}
\midrule
\noalign{\smallskip}
\multirow{1}{*}{\texttt{gender}} & \texttt{Female} & $-0.042$ & $0.069$ & $0.546$ & \\
\noalign{\smallskip}
\midrule
\noalign{\smallskip}
\multirow{4}{*}{\texttt{region}} & \texttt{Midwest} & $0.158$ & $0.082$ & $0.053$ & \multirow{4}{*}{$0.176$}\\
 & \texttt{Southeast} & $0.094$ & $0.089$ & $0.290$ & \\
 & \texttt{South} & $0.115$ & $0.110$ & $0.294$ & \\
 & \texttt{West} & $0.196$ & $0.083$ & $0.019$ & \\
\noalign{\smallskip}
\midrule
\noalign{\smallskip}
\multirow{2}{*}{\texttt{generation}} & \texttt{Boomer} & $-0.098$ & $0.077$ & $0.203$ & \multirow{2}{*}{$0.443$}\\
 & \texttt{Gen X} & $-0.095$ & $0.108$ & $0.376$ & \\
\noalign{\smallskip}
\midrule
\noalign{\smallskip}
\multirow{2}{*}{\texttt{experience}} & \texttt{(1,10]} & $0.159$ & $0.073$ & $0.030$ & \multirow{2}{*}{$0.015$}\\
 & \texttt{(0,1]} & $0.355$ & $0.132$ & $0.007$ & \\
\noalign{\smallskip}
\midrule
\noalign{\smallskip}
\multirow{7}{*}{\texttt{religion}} & \texttt{Catholic} & $-0.030$ & $0.077$ & $0.697$ & \multirow{7}{*}{$0.300$}\\
 & \texttt{Presbyterian} & $0.113$ & $0.090$ & $0.211$ & \\
 & \texttt{Baptist} & $0.160$ & $0.106$ & $0.129$ & \\
 & \texttt{Jewish} & $-0.044$ & $0.105$ & $0.672$ & \\
 & \texttt{Methodist} & $-0.060$ & $0.101$ & $0.554$ & \\
 & \texttt{Lutheran} & $-0.106$ & $0.117$ & $0.364$ & \\
 & \texttt{Mormon} & $-0.123$ & $0.119$ & $0.301$ & \\
\noalign{\smallskip}
\bottomrule
\noalign{\medskip}
\end{tabular}

    }
    \caption{Posterior estimates of the regression coefficients for the model with fixed ideological positions across topics and including no interactions.}
    \label{tab:regression_coefs}
\end{table}

\begin{comment}
\begin{table}
    \centering
    \resizebox{0.5\textwidth}{!}{
        \input{JASA_paper/tab/TBIPhier_all/regression_coefs_k_13.tex}
    }
    \caption{Posterior means of the regression coefficients estimated for the model including interactions.}
    \label{tab:regression_coefs_k_13}
\end{table}

\begin{table}
    \centering
    \resizebox{0.8\textwidth}{!}{
        \input{JASA_paper/tab/TBIPhier_all/regression_coefs_k_5_9.tex}
    }
    \caption{Posterior means of the regression coefficients estimated for the model including interactions, 2 topics next to each other.}
    \label{tab:regression_coefs_k_5_9}
\end{table}
\end{comment}

\begin{sidewaystable}
    \centering
    \resizebox{\textwidth}{!}{
        \begin{tabular}{ll|rrrr|rrrr|rrrr|rrrr|rrrr|rrrr}
\noalign{\smallskip}
\toprule
\multirow{2}{*}{Coefficient} & \multirow{2}{*}{Category}  & \multicolumn{4}{c}{Topic 4} & \multicolumn{4}{c}{Topic 9} & \multicolumn{4}{c}{Topic 11} & \multicolumn{4}{c}{Topic 13} & \multicolumn{4}{c}{Topic 16} & \multicolumn{4}{c}{Topic 24}\\
 &  & Estimate & SE & CCP & CCP (all) & Estimate & SE & CCP & CCP (all) & Estimate & SE & CCP & CCP (all) & Estimate & SE & CCP & CCP (all) & Estimate & SE & CCP & CCP (all) & Estimate & SE & CCP & CCP (all)\\
\midrule
\noalign{\smallskip}
\texttt{intercept} &  & $-0.425$ & $0.049$ & $<0.001$ &  & $-0.429$ & $0.049$ & $<0.001$ &  & $-0.447$ & $0.049$ & $<0.001$ &  & $-0.451$ & $0.049$ & $<0.001$ &  & $-0.455$ & $0.049$ & $<0.001$ &  & $-0.451$ & $0.049$ & $<0.001$ & \\
\noalign{\smallskip}
\midrule
\noalign{\smallskip}
\multirow{2}{*}{\texttt{party}} & \texttt{Republican} & $0.283$ & $0.112$ & $0.012$ & \multirow{2}{*}{$0.023$} & $0.015$ & $0.112$ & $0.893$ & \multirow{2}{*}{$0.811$} & $0.720$ & $0.112$ & $<0.001$ & \multirow{2}{*}{$<0.001$} & $0.574$ & $0.112$ & $<0.001$ & \multirow{2}{*}{$<0.001$} & $0.737$ & $0.112$ & $<0.001$ & \multirow{2}{*}{$<0.001$} & $0.212$ & $0.112$ & $0.059$ & \multirow{2}{*}{$0.116$}\\
 & \texttt{Independent} & $0.190$ & $0.168$ & $0.259$ &  & $0.107$ & $0.168$ & $0.526$ &  & $0.164$ & $0.168$ & $0.328$ &  & $0.142$ & $0.168$ & $0.399$ &  & $0.308$ & $0.168$ & $0.067$ &  & $0.151$ & $0.168$ & $0.370$ & \\
\noalign{\smallskip}
\midrule
\noalign{\smallskip}
\multirow{1}{*}{\texttt{gender}} & \texttt{Female} & $0.156$ & $0.093$ & $0.092$ &  & $-0.044$ & $0.093$ & $0.633$ &  & $0.139$ & $0.093$ & $0.134$ &  & $-0.255$ & $0.093$ & $0.006$ &  & $-0.039$ & $0.093$ & $0.674$ &  & $0.073$ & $0.093$ & $0.428$ & \\
\noalign{\smallskip}
\midrule
\noalign{\smallskip}
\multirow{4}{*}{\texttt{region}} & \texttt{Midwest} & $0.020$ & $0.065$ & $0.759$ & \multirow{4}{*}{$0.075$} & $-0.049$ & $0.065$ & $0.456$ & \multirow{4}{*}{$0.034$} & $0.082$ & $0.065$ & $0.207$ & \multirow{4}{*}{$0.141$} & $0.033$ & $0.065$ & $0.618$ & \multirow{4}{*}{$0.093$} & $0.046$ & $0.065$ & $0.479$ & \multirow{4}{*}{$0.258$} & $0.061$ & $0.065$ & $0.350$ & \multirow{4}{*}{$0.346$}\\
 & \texttt{Southeast} & $0.104$ & $0.070$ & $0.137$ &  & $0.116$ & $0.070$ & $0.095$ &  & $0.121$ & $0.070$ & $0.083$ &  & $0.164$ & $0.070$ & $0.019$ &  & $0.086$ & $0.070$ & $0.215$ &  & $0.096$ & $0.070$ & $0.169$ & \\
 & \texttt{South} & $0.109$ & $0.074$ & $0.141$ &  & $0.114$ & $0.074$ & $0.124$ &  & $0.104$ & $0.074$ & $0.160$ &  & $0.073$ & $0.074$ & $0.327$ &  & $0.121$ & $0.074$ & $0.103$ &  & $0.077$ & $0.074$ & $0.296$ & \\
 & \texttt{West} & $0.135$ & $0.063$ & $0.032$ &  & $0.137$ & $0.063$ & $0.029$ &  & $0.068$ & $0.063$ & $0.274$ &  & $0.085$ & $0.063$ & $0.173$ &  & $0.067$ & $0.063$ & $0.284$ &  & $0.068$ & $0.063$ & $0.279$ & \\
\noalign{\smallskip}
\midrule
\noalign{\smallskip}
\multirow{2}{*}{\texttt{generation}} & \texttt{Boomer} & $0.160$ & $0.036$ & $<0.001$ & \multirow{2}{*}{$<0.001$} & $0.148$ & $0.036$ & $<0.001$ & \multirow{2}{*}{$<0.001$} & $0.147$ & $0.036$ & $<0.001$ & \multirow{2}{*}{$<0.001$} & $0.155$ & $0.036$ & $<0.001$ & \multirow{2}{*}{$<0.001$} & $0.146$ & $0.036$ & $<0.001$ & \multirow{2}{*}{$<0.001$} & $0.148$ & $0.036$ & $<0.001$ & \multirow{2}{*}{$<0.001$}\\
 & \texttt{Gen X} & $0.143$ & $0.079$ & $0.069$ &  & $0.255$ & $0.079$ & $0.001$ &  & $0.222$ & $0.079$ & $0.005$ &  & $0.239$ & $0.079$ & $0.002$ &  & $0.214$ & $0.079$ & $0.007$ &  & $0.139$ & $0.079$ & $0.078$ & \\
\noalign{\smallskip}
\midrule
\noalign{\smallskip}
\multirow{2}{*}{\texttt{experience}} & \texttt{(1,10]} & $0.047$ & $0.059$ & $0.424$ & \multirow{2}{*}{$0.680$} & $0.025$ & $0.059$ & $0.668$ & \multirow{2}{*}{$0.808$} & $-0.012$ & $0.059$ & $0.843$ & \multirow{2}{*}{$0.589$} & $0.076$ & $0.059$ & $0.196$ & \multirow{2}{*}{$0.386$} & $0.039$ & $0.059$ & $0.509$ & \multirow{2}{*}{$0.721$} & $-0.012$ & $0.059$ & $0.843$ & \multirow{2}{*}{$0.676$}\\
 & \texttt{(0,1]} & $0.057$ & $0.137$ & $0.677$ &  & $-0.063$ & $0.137$ & $0.644$ &  & $0.136$ & $0.137$ & $0.320$ &  & $0.077$ & $0.137$ & $0.573$ &  & $0.070$ & $0.137$ & $0.609$ &  & $0.116$ & $0.137$ & $0.397$ & \\
\noalign{\smallskip}
\midrule
\noalign{\smallskip}
\multirow{7}{*}{\texttt{religion}} & \texttt{Catholic} & $0.023$ & $0.061$ & $0.708$ & \multirow{7}{*}{$0.040$} & $0.024$ & $0.061$ & $0.700$ & \multirow{7}{*}{$0.012$} & $0.081$ & $0.061$ & $0.187$ & \multirow{7}{*}{$0.069$} & $0.074$ & $0.061$ & $0.224$ & \multirow{7}{*}{$0.006$} & $0.016$ & $0.061$ & $0.794$ & \multirow{7}{*}{$0.040$} & $0.025$ & $0.061$ & $0.682$ & \multirow{7}{*}{$0.065$}\\
 & \texttt{Presbyterian} & $0.207$ & $0.083$ & $0.013$ &  & $0.291$ & $0.083$ & $<0.001$ &  & $0.161$ & $0.083$ & $0.053$ &  & $0.220$ & $0.083$ & $0.008$ &  & $0.205$ & $0.083$ & $0.013$ &  & $0.136$ & $0.083$ & $0.102$ & \\
 & \texttt{Baptist} & $0.030$ & $0.095$ & $0.751$ &  & $-0.012$ & $0.095$ & $0.903$ &  & $-0.007$ & $0.095$ & $0.940$ &  & $0.068$ & $0.095$ & $0.475$ &  & $0.057$ & $0.095$ & $0.551$ &  & $-0.019$ & $0.095$ & $0.843$ & \\
 & \texttt{Jewish} & $0.076$ & $0.103$ & $0.458$ &  & $-0.002$ & $0.103$ & $0.983$ &  & $-0.140$ & $0.103$ & $0.173$ &  & $-0.164$ & $0.103$ & $0.112$ &  & $-0.057$ & $0.103$ & $0.583$ &  & $0.188$ & $0.103$ & $0.068$ & \\
 & \texttt{Methodist} & $-0.155$ & $0.091$ & $0.090$ &  & $-0.087$ & $0.091$ & $0.337$ &  & $-0.078$ & $0.091$ & $0.389$ &  & $-0.139$ & $0.091$ & $0.126$ &  & $-0.094$ & $0.091$ & $0.302$ &  & $-0.008$ & $0.091$ & $0.933$ & \\
 & \texttt{Lutheran} & $-0.174$ & $0.084$ & $0.039$ &  & $-0.173$ & $0.084$ & $0.040$ &  & $-0.160$ & $0.084$ & $0.057$ &  & $-0.166$ & $0.084$ & $0.048$ &  & $-0.176$ & $0.084$ & $0.037$ &  & $-0.157$ & $0.084$ & $0.063$ & \\
 & \texttt{Mormon} & $0.072$ & $0.140$ & $0.605$ &  & $-0.011$ & $0.140$ & $0.937$ &  & $-0.145$ & $0.140$ & $0.302$ &  & $-0.171$ & $0.140$ & $0.223$ &  & $-0.203$ & $0.140$ & $0.148$ &  & $-0.254$ & $0.140$ & $0.070$ & \\
\noalign{\smallskip}
\midrule
\noalign{\smallskip}
\multirow{1}{*}{\texttt{party\_Republican:gender}} & \texttt{Female} & $-0.043$ & $0.092$ & $0.638$ &  & $-0.141$ & $0.092$ & $0.126$ &  & $-0.122$ & $0.092$ & $0.184$ &  & $-0.088$ & $0.092$ & $0.338$ &  & $-0.085$ & $0.092$ & $0.356$ &  & $-0.096$ & $0.092$ & $0.298$ & \\
\noalign{\smallskip}
\midrule
\noalign{\smallskip}
\multirow{4}{*}{\texttt{party\_Republican:region}} & \texttt{Midwest} & $0.219$ & $0.059$ & $<0.001$ & \multirow{4}{*}{$<0.001$} & $0.215$ & $0.059$ & $<0.001$ & \multirow{4}{*}{$<0.001$} & $0.260$ & $0.059$ & $<0.001$ & \multirow{4}{*}{$<0.001$} & $0.246$ & $0.059$ & $<0.001$ & \multirow{4}{*}{$<0.001$} & $0.251$ & $0.059$ & $<0.001$ & \multirow{4}{*}{$<0.001$} & $0.240$ & $0.059$ & $<0.001$ & \multirow{4}{*}{$<0.001$}\\
 & \texttt{Southeast} & $0.073$ & $0.066$ & $0.270$ &  & $0.067$ & $0.066$ & $0.308$ &  & $0.078$ & $0.066$ & $0.239$ &  & $0.081$ & $0.066$ & $0.220$ &  & $0.065$ & $0.066$ & $0.326$ &  & $0.060$ & $0.066$ & $0.361$ & \\
 & \texttt{South} & $0.136$ & $0.074$ & $0.067$ &  & $0.141$ & $0.074$ & $0.057$ &  & $0.131$ & $0.074$ & $0.077$ &  & $0.100$ & $0.074$ & $0.179$ &  & $0.148$ & $0.074$ & $0.046$ &  & $0.104$ & $0.074$ & $0.159$ & \\
 & \texttt{West} & $0.193$ & $0.066$ & $0.004$ &  & $0.182$ & $0.066$ & $0.006$ &  & $0.193$ & $0.066$ & $0.004$ &  & $0.220$ & $0.066$ & $<0.001$ &  & $0.207$ & $0.066$ & $0.002$ &  & $0.224$ & $0.066$ & $<0.001$ & \\
\noalign{\smallskip}
\midrule
\noalign{\smallskip}
\multirow{2}{*}{\texttt{party\_Republican:generation}} & \texttt{Boomer} & $-0.138$ & $0.065$ & $0.034$ & \multirow{2}{*}{$0.011$} & $-0.177$ & $0.065$ & $0.007$ & \multirow{2}{*}{$0.007$} & $-0.139$ & $0.065$ & $0.032$ & \multirow{2}{*}{$0.043$} & $-0.156$ & $0.065$ & $0.017$ & \multirow{2}{*}{$0.018$} & $-0.136$ & $0.065$ & $0.036$ & \multirow{2}{*}{$0.036$} & $-0.138$ & $0.065$ & $0.035$ & \multirow{2}{*}{$0.006$}\\
 & \texttt{Gen X} & $-0.193$ & $0.084$ & $0.022$ &  & $-0.156$ & $0.084$ & $0.065$ &  & $-0.126$ & $0.084$ & $0.137$ &  & $-0.145$ & $0.084$ & $0.086$ &  & $-0.141$ & $0.084$ & $0.095$ &  & $-0.217$ & $0.084$ & $0.010$ & \\
\noalign{\smallskip}
\midrule
\noalign{\smallskip}
\multirow{2}{*}{\texttt{party\_Republican:experience}} & \texttt{(1,10]} & $0.015$ & $0.078$ & $0.845$ & \multirow{2}{*}{$0.925$} & $-0.056$ & $0.078$ & $0.476$ & \multirow{2}{*}{$0.406$} & $0.012$ & $0.078$ & $0.879$ & \multirow{2}{*}{$0.960$} & $-0.004$ & $0.078$ & $0.960$ & \multirow{2}{*}{$0.984$} & $0.044$ & $0.078$ & $0.572$ & \multirow{2}{*}{$0.819$} & $0.018$ & $0.078$ & $0.814$ & \multirow{2}{*}{$0.969$}\\
 & \texttt{(0,1]} & $-0.044$ & $0.137$ & $0.747$ &  & $-0.165$ & $0.137$ & $0.230$ &  & $0.035$ & $0.137$ & $0.798$ &  & $-0.024$ & $0.137$ & $0.861$ &  & $-0.031$ & $0.137$ & $0.820$ &  & $0.015$ & $0.137$ & $0.914$ & \\
\noalign{\smallskip}
\midrule
\noalign{\smallskip}
\multirow{7}{*}{\texttt{party\_Republican:religion}} & \texttt{Catholic} & $-0.006$ & $0.070$ & $0.926$ & \multirow{7}{*}{$0.054$} & $-0.024$ & $0.070$ & $0.726$ & \multirow{7}{*}{$0.081$} & $-0.009$ & $0.070$ & $0.898$ & \multirow{7}{*}{$0.006$} & $-0.011$ & $0.070$ & $0.874$ & \multirow{7}{*}{$0.019$} & $0.001$ & $0.070$ & $0.988$ & \multirow{7}{*}{$0.048$} & $-0.014$ & $0.070$ & $0.840$ & \multirow{7}{*}{$0.012$}\\
 & \texttt{Presbyterian} & $-0.212$ & $0.082$ & $0.009$ &  & $-0.169$ & $0.082$ & $0.039$ &  & $-0.242$ & $0.082$ & $0.003$ &  & $-0.234$ & $0.082$ & $0.004$ &  & $-0.202$ & $0.082$ & $0.013$ &  & $-0.242$ & $0.082$ & $0.003$ & \\
 & \texttt{Baptist} & $-0.003$ & $0.102$ & $0.976$ &  & $-0.059$ & $0.102$ & $0.564$ &  & $-0.052$ & $0.102$ & $0.611$ &  & $0.027$ & $0.102$ & $0.788$ &  & $0.021$ & $0.102$ & $0.836$ &  & $-0.065$ & $0.102$ & $0.522$ & \\
 & \texttt{Jewish} & $0.000$ & $3.211$ & $1.000$ &  & $0.000$ & $3.211$ & $1.000$ &  & $0.000$ & $3.211$ & $1.000$ &  & $0.000$ & $3.211$ & $1.000$ &  & $0.000$ & $3.211$ & $1.000$ &  & $0.000$ & $3.211$ & $1.000$ & \\
 & \texttt{Methodist} & $0.144$ & $0.090$ & $0.111$ &  & $0.208$ & $0.090$ & $0.021$ &  & $0.204$ & $0.090$ & $0.024$ &  & $0.194$ & $0.090$ & $0.032$ &  & $0.191$ & $0.090$ & $0.034$ &  & $0.175$ & $0.090$ & $0.053$ & \\
 & \texttt{Lutheran} & $0.188$ & $0.104$ & $0.072$ &  & $0.158$ & $0.104$ & $0.130$ &  & $0.229$ & $0.104$ & $0.028$ &  & $0.188$ & $0.104$ & $0.071$ &  & $0.183$ & $0.104$ & $0.079$ &  & $0.206$ & $0.104$ & $0.049$ & \\
 & \texttt{Mormon} & $0.135$ & $0.135$ & $0.317$ &  & $0.036$ & $0.135$ & $0.791$ &  & $0.105$ & $0.135$ & $0.435$ &  & $0.061$ & $0.135$ & $0.651$ &  & $0.046$ & $0.135$ & $0.732$ &  & $0.104$ & $0.135$ & $0.439$ & \\
\noalign{\smallskip}
\bottomrule
\noalign{\medskip}
\end{tabular}
 
    }
    \caption{Posterior estimates of the regression coefficients for the model with topic-specific ideological positions for selected topics and including interactions.}
    \label{tab:regression_coefs_k_4_9_11_13_16_24}
\end{sidewaystable}
\end{document}